\newtheorem{theorem}{Theorem}[section]
\newtheorem{lemma}[theorem]{Lemma}
\newcommand{\OB}{\mathcal{O}_B}
\newcommand{\OO}{\mathcal{O}}
\newcommand{\sOB}{\widetilde{\mathcal{O}}_B}
\newcommand{\sOO}{\widetilde{\mathcal{O}}}
\def\norm#1{\mathopen\| #1 \mathclose\|}
\def\vol{\mbox{vol}}
\newcommand{\Z}{\mathbb{Z}}
\newcommand{\R}{\mathbb{R}}
\newcommand{\cC}{\mathcal{C}}
\newcommand{\lmi}[1]{\bm{F}(\bm{#1})}
\renewcommand{\line}[2]{\ell(\bm{#1}, \bm{#2})}
\newcommand{\transpose}[1]{\bm{#1}^\top}
\newcommand{\adjoint}[1]{\bm{#1}^*}
\newcommand{\trace}[1]{\mathtt{Trace}\left(#1\right)}
\newcommand{\segment}[2]{[\bm{#1}, \bm{#2}]}
\newcommand{\boundary}[1]{{\partial #1}\xspace}
\newcommand{\interior}[1]{#1^{\circ}\xspace}
\newcommand{\rank}[1]{\mathtt{rank}(#1)\xspace}
\newcommand{\intersectionop}{\textsc{intersection}\xspace}
\newcommand{\reflectionop}{\textsc{reflection}\xspace}
\newcommand{\membershipop}{\textsc{membership}\xspace}
\newcommand{\PEP}{\textsc{pep}\xspace}
\newcommand{\unitball}{\mathcal{B}_n}
\newcommand{\pickrandom}{\leftarrow_R}
\newcommand{\billiard}{\textsc{W-Billard}\xspace}
\newcommand{\hitandrun}{\textsc{W-HnR}\xspace}
\newcommand{\cdhr}{\textsc{W-CHnR}\xspace}
\newcommand{\hmcr}{\textsc{W-HMC-r}\xspace}
\journal{...}
\begin{document}

\begin{frontmatter}

  
\title{Efficient Sampling  from Feasible Sets of SDPs and Volume Approximation}


\author[ATH,DI]{Apostolos Chalkis}
\author[ATH,DI]{Ioannis Z. Emiris}
\author[DI]{Vissarion Fisikopoulos}
\author[DI]{Panagiotis Repouskos}
\author[FR]{Elias Tsigaridas}

\address[ATH]{ATHENA Research Center, Greece}
\address[DI]{Department of Informatics \& Telecommunications \\ National \& Kapodistrian University of Athens, Greece}
\address[FR]{Inria Paris, IMJ-PRG,\\ Sorbonne Universit\'e and Paris Universit\'e}

\begin{abstract}
  We present algorithmic, complexity, and implementation results on
  the problem of sampling points from a
  spectrahedron, that is the feasible region of a semidefinite program.
 
  Our main tool is geometric random walks. We analyze the arithmetic and bit
  complexity of certain primitive geometric operations that are based on the
  algebraic properties of spectrahedra and the polynomial eigenvalue problem.
This study leads to the implementation of a broad collection of random walks for
sampling from spectrahedra that experimentally show faster mixing times than methods
currently employed either in theoretical studies or in applications, 
  including the popular family of Hit-and-Run walks.
The different random walks offer a variety of advantages, 
thus allowing us to efficiently sample from general probability distributions, for example the
  family of log-concave distributions which arise in numerous applications.  
  We focus on two major applications of independent interest:
  (i) approximate the volume of a spectrahedron, and
  (ii) compute the expectation of functions coming from robust optimal control. 

We exploit efficient linear algebra algorithms and
  implementations to address the aforementioned computations in very high dimension. 
In particular, we provide a C++ open source implementation of our methods
  that scales efficiently, for the first time, up to dimension~200.
  We illustrate its efficiency on various data sets.
\end{abstract}



\begin{keyword}
spectahedra \sep semidefinite-programming \sep sampling \sep random walk \sep polynomial eigenvalue problem \sep volume approximation \sep optimization



\end{keyword}

\end{frontmatter}

\section{Introduction}

Spectrahedra are probably the most well studied shapes after
polyhedra.  We can represent polyhedra as the intersection of the
positive orthant with an affine subspace. Spectrahedra  generalize polyhedra,
in the sense that they are the intersection of the cone of positive
semidefinite matrices ---{\it i.e.,}\ symmetric matrices with
non-negative eigenvalues--- with an affine space.
In other words, a spectrahedron $S \subset \R^n$ is the feasible set
of a linear matrix inequality. That is, let
\begin{equation} \label{eq:lmi}
\lmi{x} = \bm{A}_0 + x_1 \bm{A}_1 + \cdots + x_n \bm{A}_n  ,
\end{equation}
where $\bm{A}_i$ are symmetric matrices in $\R^{m \times m}$, then
$S = \{ \bm{x} \in \R^n \,|\, \lmi{x} \succeq 0 \}$, where $\succeq$ denotes
positive semidefiniteness. We assume throughout that $S$ is bounded of dimension
$n$. Spectrahedra are convex sets (Figures~\ref{fig:spectra-curve} and~\ref{fig:sampling}) and every polytope is a spectrahedron, but not the opposite.
They are the feasible regions of semidefinite programs~\cite{Ramana99} in the
way that polyhedra are feasible regions of linear programs.

\begin{figure}[t]
	\begin{center}
		\includegraphics[scale=0.10]{./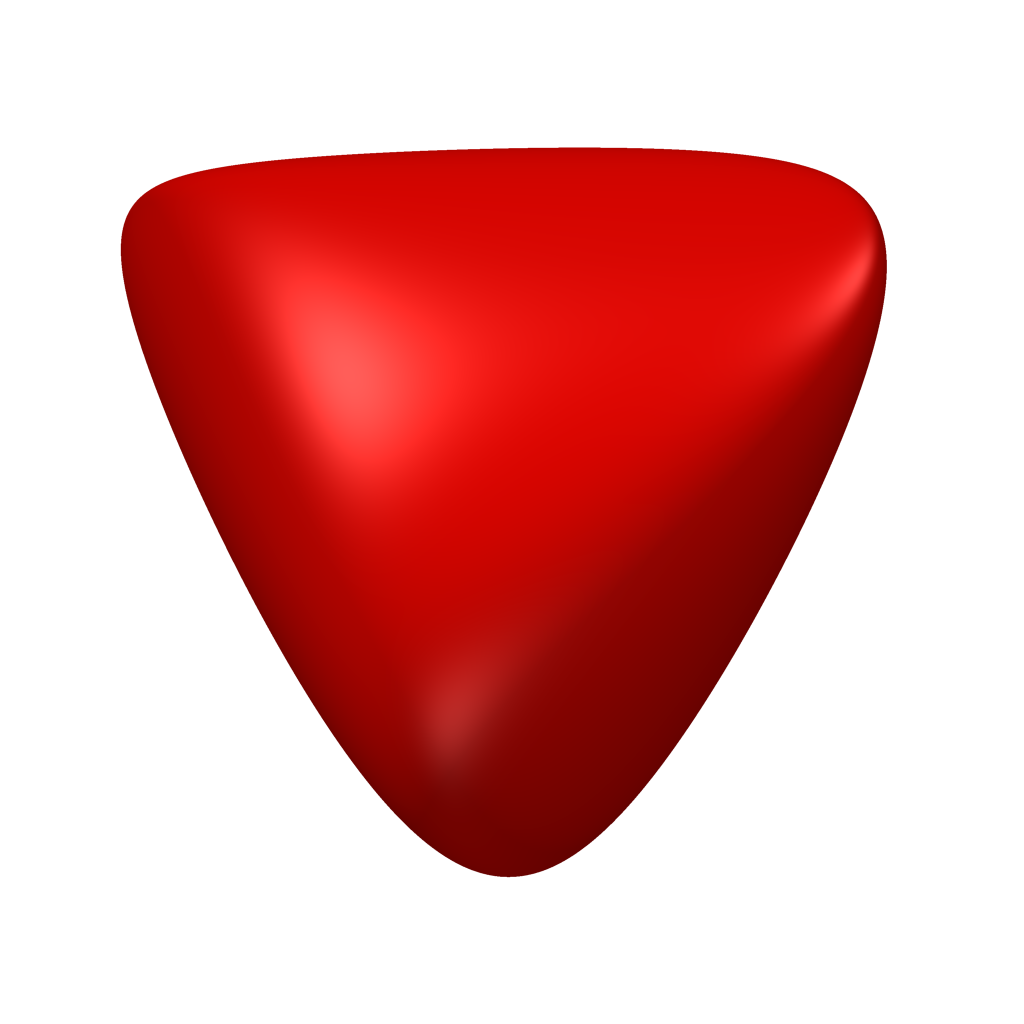}
		\hspace{1.5cm}
		\includegraphics[scale=0.43]{./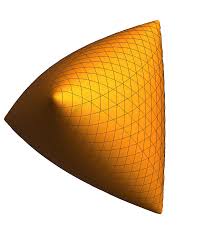}
		\caption{Left, a 3D elliptope (image from \url{https://en.wikipedia.org/wiki/Spectrahedron}); an elliptope --a special case of spectrahedron-- is the set of all real, square symmetric matrices whose diagonal entries are all equal to one and whose eigenvalues are all non-negative. Right, a spectrahedron from \cite{Breiding17}; it is called cubic spectrahedron.}
		\label{fig:elliptope}
	\end{center}
\end{figure}


Efficient methods for sampling points in spectrahedra are crucial for
many applications, such as volume approximation~\cite{Cousins15},
integration~\cite{Lovasz06}, semidefinite
optimization~\cite{Lovasz06,Kalai06}, and applications in robust
control
analysis~\cite{CalCam-rc-cdc-03,Calafiore-cdc-04,TCD-rc-bk-12}. To
sample in the interior or on the boundary of $S$, we employ geometric
random walks~\cite{Vempala05}. A geometric random walk on $S$ starts
at some interior point and at each step moves to a "neighboring" point
that we choose according to some distribution, depending only on the
current point; thus it is a special category of Markov chains. For example, in
the so-called ball walk, we move to a point $p$ that we choose uniformly at
random in a ball of fixed radius $\delta$, if $p\in S$. The complexity of a
random walk depends on its mixing time ---the number of steps required to bound
the distance between the current and the stationary distribution--- and the
complexity of the basic geometric operations performed at each step of the walk;
we call the latter per-step complexity.

The majority of geometric random walks are defined for general convex bodies and
are based on an oracle; usually the membership oracle. There are also a few walks,
e.g., Vaidya walk~\cite{Chen17} and the sub-linear ball walk~\cite{Mangoubi19},
specialized for polytopes. Most results on their analysis focus on convergence
and mixing time, while they define abstractly the operations they perform at
each step and they enclose them in the corresponding oracle. That is why the
complexity bounds involve the number of oracle calls.

To specialize a random walk for a family or representation of convex
bodies one has to come up with efficient algorithms for the basic
geometric operations to realize the (various) oracles. These
operations should exploit the underlying geometric and algebraic
properties and are of independent interest.
They depend on efficient (numerical) linear algebra computations.
More importantly, they
dominate the per-step complexity and are hence crucial both for the
overall complexity to sample a point from the target distribution,
as well as for an efficient implementation.

The study of basic geometric operations to sample from non-linear convex objects
finds its roots in non-linear computational geometry. During the last two
decades, there are combined efforts~\cite{BT-nonlinear-book-06,EmSoTh} to develop
efficient algorithms for the basic operations (predicates) that are behind
classical geometric algorithms, like convex hull, arrangements, Voronoi
diagrams, to go beyond points and lines and handle curved objects. For this, one
exploits efficiently structure and symmetry in linear algebra computations, and
develops novel algebraic tools.  

To our knowledge, only the {\em Random Directions Hit and Run} (\hitandrun) random walk~\cite{Smith84} has been
studied for spectrahedra \cite{Calafiore-cdc-04}. To exploit the various other
walks, like {\em Ball walk} \cite{Vempala05}, {\em Billiard walk} (\billiard)~\cite{GryazinaBilliard},
and {\em Hamiltonian Monte Carlo with boundary reflections} (\hmcr)~\cite{Mohasel15},
one needs to provide geometric
operations. For example, we need to compute the reflection of a curve at the
boundary, and the intersection point of a curve with the boundary (of a
spectrahedron).

We should mention that there is a gap \cite{Cousins16,Betancourt17} between the
theoretical worst case bounds for the mixing times and the practical
performance of the random walk algorithms.  
Furthermore, there are random walks without known
theoretical mixing times, such as Coordinate Directions Hit and Run (\cdhr), \billiard\ and \hmcr. To study them experimentally,
it is imperative to provide an efficient realization of the corresponding oracles.





\paragraph{Previous Work}
Sampling convex sets via random walks has attracted a lot of interest in the last
decades. Most of the works assume that the convex sets are polytopes;
\cite{Mangoubi19} provides an overview of the state-of-the-art. Random walks on
spectrahedra are studied in~\cite{Polyakdecomposition, Dabbene2010ARC}, where
they exploit the \hitandrun\ and the computation of the intersection of the
walk with the boundary reduces to a generalized eigenvalue~problem.

The \billiard~\cite{GryazinaBilliard} is a general way of
sampling in convex or non-convex shapes from the uniform distribution.  A
mathematical billiard consists of a domain $\mathcal{D}$ and a
point-mass, moving freely in $\mathcal{D}$
\cite{Tabachnikov2005GeometryAB}. When this point-mass hits the
boundary, it performs a specular reflection, albeit without losing
velocity. An  application of billiards is the study of optical
properties of conics \cite[Section~4]{Tabachnikov2005GeometryAB}.

If the trajectory is not a line, but rather a parametric curve,
then the intersection operation reduces to solving a polynomial eigenvalue problem (PEP);
\hmcr\ requires this operation. 
PEP is a well-studied  problem in computational mathematics, e.g.,
\cite{Tis-becpep-2000}, and it appears in many applications.
There are important results both for the perturbation analysis of PEP 
\cite{Tis-becpep-2000,Berhanu-phd-05,DedTis-pep-pert-03},
as well as for the 
condition-based analysis of algorithms for the real and complex versions of PEP,
if we assume  random inputs 
\cite{ArmBel-rand-pep-19,BelKoz-real-pep-19}.

For the closely related problem of volume computation, there is also an
extensive bibliography~\cite{Bueler98}. The bulk of the theoretical
studies are either for general convex bodies~\cite{Dyer91, Cousins15} or polytopes~\cite{LeeRiem18}.
Practical algorithms and implementations exist only for polytopes~\cite{Emiris18, Cousins16}. 
Nevertheless, there are notable exceptions that consider algorithms
for computing the volume of compact (basic) semi-algebraic sets.  For
example, in \cite{LMS-sa-volume-19} they exploit the periods of rational
integrals. In the same setting,
\cite{KorHen-mom-sa-volume-18,HLS-sa-volume-09} introduce numerical
approximation schemes for volume computations, which rely on the
moment-based algorithms and semi-definite programming.

Finally, sampling from a multivariate distribution is a central
problem in numerous applications. For example, it is useful in robust
control analysis \cite{Calafiore-cdc-04,CalCam-rc-cdc-03,TCD-rc-bk-12} to overcome
the worst case hardness as well as in integration~\cite{Lovasz06} and
convex optimization
\cite{Dabbene2010ARC,Kalai06}.

\paragraph{Our contribution}
We present a framework of basic geometric operations for computations with
spectrahedra. In particular, we analyze the arithmetic and bit complexity of the three fundamental
operations, \membershipop, \intersectionop, \reflectionop, by reducing them to
 linear algebra computations.
Based on this framework, we support a rich class of
geometric random walks, which in turn we employ in order to build efficient
methods for sampling points from spectahedra, under various
probability distributions. We apply these tools to approximate the
volume of spectrahedra, as well as to integrate over spectrahedral domains.
This extends the limits of the state-of-the-art methods that sample
from spectrahedra, which actually involve only the \hitandrun~\cite{Dabbene2010ARC,Calafiore-cdc-04}.
We offer an efficient
C++ implementation of our algorithms as a development branch of the package {\tt
	volesti}\footnote{\scriptsize\url{https://github.com/GeomScale/volume_approximation/tree/sample_spectrahedra}}, an open source library for high dimensional sampling and volume
computation.
While the implementation is in C++, there is also an {\tt R} interface, for
easier access to its functionality. Our implementation is based on state-of-the-art algorithms in numerical linear algebra to address computation in high dimension. Our software makes use of powerful C++ libraries such as {\tt Eigen}~\cite{eigenlib} and {\tt Spectra}~\cite{Spectra}.

We demonstrate the efficiency of our approach and implementation on
problems from robust control  and optimization.
First, as a special case of integration, we approximate the volume of
spectrahedra up to dimension 100; this is, to the best of our
knowledge, the first time such computations for non-linear objects are
performed in high dimensions.  Then, we approximate the expected value
of a function $f:\R^n\rightarrow [0,1]$, whose argument is a random
variable having uniform distribution over a spectrahedron of dimension~$200$.

Finally, we sample from the Boltzmann distribution using \hmcr; this
exploits a random walk in a spectrahedron that employs a
polynomial trajectory of degree two. Sampling using \hmcr\ from
truncated distributions is a classical problem in computational
statistics~ \cite{Chevallier18, Mohasel15}; alas, existing approaches
handle either special distributions or special cases of constraints
\cite{Pakman14, Lan14}.
We equip \hmcr\ with geometric operations to handle log-concave densities
truncated by linear matrix inequalities (LMI) constraints. 
A combination of Boltzmann distribution with a
simulated annealing technique~\cite{Kalai06} could lead to a
practical solver for semidefinite programs (SDP).  

A very short version of this paper has appeared as a poster in \cite{issac_poster}.


\paragraph{Paper organization} First, we introduce our notation.
In Section~\ref{sec:geom-ops} we introduce the basic geometric operations 
used to efficiently implement membership and boundary oracles.
In Section~\ref{sec:rand-walks}, we develop the different types of random
walks. Finally, Section~\ref{sec:implementation} presents our
implementation, our applications, and various experiments.

\paragraph{Notation}
We denote by $\OO$, respectively $\OB$, the arithmetic, respectively  bit, complexity
and we use $\sOO$, respectively $\sOB$, to ignore (poly-)logarithmic
factors.  The bitsize of a univariate polynomial $A \in \Z[x]$ is the
maximum bitsize of its coefficients.
We use bold letters for matrices, $\bm{A}$, and vectors, $\bm{v}$; we
denote by $A_{i,j}$, respectively $v_i$, their elements;
$\transpose{A}$ is the transpose and $\adjoint{A}$ the adjoint of
$\bm{A}$.  If $\bm{x} = (x_1, \dots, x_n)$, then
$\lmi{x} = \bm{A}_0 + \sum_{i=1}^n x_i \bm{A}_i$, see 
(\ref{eq:lmi}).
For two points $\bm{x}$ and $\bm{y}$, we denote the line through them
by $\line{x}{y}$ and their segment as 
$\segment{x}{y}$.
For a spectrahedron $S$, let  its interior be $\interior{S}$
and its boundary $\boundary{S}$.
We represent a probability distribution $\pi$ with a probability density function $\pi(x)$. When $\pi$ is truncated to $S$ the support of $\pi(x)$ is $S$.
If $\pi$ is log-concave,  then $\pi(\bm{x})\propto e^{-\alpha f(\bm{x})}$, where $f:\R^d\rightarrow \R$ a convex function.
Finally, let $\unitball$ be the $n$-dimensional unit ball and denote by
$\boundary{\unitball}$ its boundary.

\section{Basic geometric operations} \label{sec:geom-ops}	

Our toolbox for computations with spectrahedra and implementing random
walks, consists of three basic geometric operations: \membershipop,
\intersectionop, and \reflectionop.  

For a spectrahedron $S$,
\membershipop decides whether a point lies inside $S$, \intersectionop
computes the intersection of an algebraic curved trajectory
$\mathcal{C}$ with the boundary $\boundary{S}$, and \reflectionop computes the
reflection of an algebraic curved trajectory when it hits
$\boundary{S}$.  The last two operations are required because random walks
can move along non-linear trajectories inside convex bodies.  For those
studied in this paper, the trajectories are parametric polynomial
curves, of various degrees.  Computation with these curves reduces to
solving the polynomial eigenvalue problem (PEP).  


\subsection{Analysis of  PEP} \label{sec:pep}

To estimate the complexity of \intersectionop we need to bound the complexity of PEP.
The Polynomial Eigenvalue Problem (PEP) consists in computing
$\lambda \in \R$ and $\bm{x} \in \R^m$ that satisfy the (matrix)
equation
\begin{equation}
\label{eq:PEP}
P(\lambda) \, \bm{x} = 0 \Leftrightarrow
(\bm{B}_d \lambda^d + \cdots + \bm{B}_1 \lambda + \bm{B}_0) \bm{x} = 0
\enspace ,
\end{equation}
where $P(\lambda)$ is a univariate matrix polynomial whose coefficients are
matrices $\bm{B}_i \in \R^{m \times m}$. We further assume that
$\bm{B}_d$ and $\bm{B}_0$ are invertible. In general, there are
$\delta = m \, d$ values of $\lambda$.  We refer the reader to
\cite{Tis-becpep-2000} for a thorough exposition of PEP.

One approach for solving PEP is to linearalize the problem and to
express the $\lambda$'s as the eigenvalues of a larger matrix.  For this,
we transform Equation~(\ref{eq:PEP}) into a linear pencil of dimension $\delta$.
Following~\cite{Berhanu-phd-05}, the {\em Companion
	Linearization} consists in solving the generalized eigenvalue
problem $\bm{C}_0 - \lambda \bm{C}_1$, where
{ 
  \[
	\bm{C}_0=
	\left[
	\begin{array}{cccc}
	\bm{B}_d & 0 & \cdots & 0 \\
	0 & \bm{I}_m &  \ddots & \vdots \\
	\vdots & \ddots & \ddots & 0\\
	0 & \cdots & 0 & \bm{I}_m
	\end{array}
	\right]
	\text{ and }
	\bm{C}_1 =
	\left[
	\begin{array}{cccc}
	\bm{B}_{d-1} & \bm{B}_{d-2} & \cdots & \bm{B}_0 \\
	-\bm{I}_m & 0 &  \cdots & 0 \\
	\vdots & \ddots & \ddots & \vdots\\
	0 & \cdots & -\bm{I}_m & 0
	\end{array}
	\right] ,
	\]
}
\noindent
and $\bm{I}_{m}$ denotes the $m \times m$ identity matrix.
The eigenvectors $\bm{x}$ and $\bm{z}$ are related as follows:
$\bm{z} = [1, \lambda, \dots, \lambda^{d-2}, \lambda^{d-1}]^{\top}
\otimes \bm{x}$.

To obtain an exact algorithm for PEP we exploit the assumption that
$\bm{B}_d$ is invertible so as to transform the problem to the following
classical eigenvalue problem
$(\lambda \bm{I}_d - \bm{C}_2) \bm{z} = 0$, where
\[
\bm{C}_2 =
\left[
\begin{array}{cccc}
\bm{B}_{d-1}\bm{B}_d^{-1} & \bm{B}_{d-2}\bm{B}_d^{-1} & \cdots & \bm{B}_0\bm{B}_d^{-1} \\
-\bm{I}_m & 0 &  \cdots & 0 \\
\vdots & \ddots & \ddots & \vdots\\
0 & \cdots & -\bm{I}_m & 0
\end{array}
\right] .
\]
The eigenvectors are roots of the characteristic polynomial of
$\bm{C}_2$.
Now the problem is to compute the eigenvalues of
$\bm{C} \in \R^{\delta \times \delta}$.
From a complexity point of view, the best algorithm 
relies on computing the roots of the characteristic
polynomial \cite{Schonhage-usolve}, which is the method we follow in order
to derive complexity bounds. 
Of course, in practice other methods are faster or more stable.  

\begin{lemma}
	\label{lem:pep}
	Consider a PEP of degree $d$, involving matrices of dimension
	$m \times m$, with integer elements of bitsize at most $\tau$, see 
	Equation~(\ref{eq:PEP}).  There is a randomized algorithm for computing 			the eigenvalues and the eigenvectors of PEP up to precision
	$\epsilon = 2^{-L}$, in $\sOB(\delta^{\omega+3} L)$, where $\delta = m d$
	and $L = \Omega(\delta^3 \tau)$.
	The arithmetic complexity  is
	$\sOO( \delta^{2.697} + \delta\lg(1/\epsilon))$.
\end{lemma}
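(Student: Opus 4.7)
The plan is to unroll the reduction chain already set up just before the statement. First, I would linearize the matrix polynomial $P(\lambda)$ to the pencil $\bm{C}_0 - \lambda \bm{C}_1$ and use invertibility of $\bm{B}_d$ to reduce to the standard eigenvalue problem $(\lambda \bm{I}_\delta - \bm{C}_2)\bm{z}=0$. The blocks of $\bm{C}_2$ are either $0$, $-\bm{I}_m$, or of the form $\bm{B}_i\bm{B}_d^{-1}$; by Cramer's rule and Hadamard's bound each rational entry has bitsize $h = \sOO(m\tau)$, and assembling $\bm{C}_2$ costs $\OO(dm^\omega)$ arithmetic operations and $\sOB(dm^{\omega+1}\tau)$ bit operations.

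Next I would compute $\chi(\lambda) = \det(\lambda \bm{I}_\delta - \bm{C}_2)$ and approximate its roots. Using the best known algorithm for the characteristic polynomial via rectangular matrix multiplication yields $\sOO(\delta^{2.697})$ field operations; accounting for the growth of intermediate rationals, this becomes $\sOB(\delta^{\omega+2}\tau)$ bit operations. The coefficients of $\chi$ are integers (after clearing a common denominator) of bitsize $\sOO(\delta h) = \sOO(\delta^2\tau)$. Schönhage's univariate solver~\cite{Schonhage-usolve} then approximates all $\delta$ roots to precision $2^{-L}$ in $\sOO(\delta \lg(1/\epsilon))$ additional arithmetic operations and, under the hypothesis $L = \Omega(\delta^3\tau)$, in $\sOB(\delta^{\omega+3}L)$ bit operations, a term that absorbs the cost of producing $\chi$ since $\delta^{\omega+3} L \ge \delta^{\omega+6}\tau \gg \delta^{\omega+2}\tau$. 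Finally, each eigenvector is recovered either from the top $m$ coordinates of $\bm{z}$ through the Kronecker structure $\bm{z} = (1,\lambda,\dots,\lambda^{d-1})^\top \otimes \bm{x}$, or by solving a single $\delta \times \delta$ system $(\lambda_i\bm{I}_\delta - \bm{C}_2)\bm{z}=0$; neither step dominates.

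The main obstacle is the precision analysis in the root-approximation step. Roots of $\chi$ can be clustered within $2^{-\Omega(\delta^3\tau)}$ of each other in the worst case, which is the Mahler-type separation bound for a degree-$\delta$ integer polynomial of bitsize $\sOO(\delta^2\tau)$, so to return eigenvalues with $L$ correct bits one must approximate at a level below this separation threshold. This is precisely what the hypothesis $L = \Omega(\delta^3\tau)$ guarantees, and it is also why the dominant bit-complexity term depends linearly on $L$. Randomization enters only through the fast characteristic polynomial and matrix inversion subroutines, which can be chosen Las-Vegas with success probability $1-o(1)$, so the final randomized complexity coincides with the deterministic arithmetic count and yields both claimed bounds.
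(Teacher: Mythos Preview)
Your reduction chain (linearize, compute the characteristic polynomial, approximate its roots, recover eigenvectors) matches the paper's, and your arithmetic-complexity bookkeeping is fine. The bit-complexity analysis, however, places the dominant cost in the wrong step.

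You assert that Sch\"onhage's solver costs $\sOB(\delta^{\omega+3}L)$ bit operations, but nothing in univariate root approximation involves fast matrix multiplication, so the exponent $\omega$ has no reason to appear there; this looks like the final answer back-fitted into the wrong line. The paper obtains only $\sOB(\delta^5 + \delta^{4.697}\tau + \delta L)$ for isolating and refining all eigenvalues (via Pan's near-optimal solver and the refinement of Pan--Tsigaridas), which is strictly below the stated bound.

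The step you dismiss with ``neither step dominates'' is in fact the bottleneck. To recover $\bm z$ from $(\lambda_i\bm I_\delta - \bm C_2)\bm z = 0$ you must work with an \emph{approximation} of the algebraic number $\lambda_i$, and the coordinates of $\bm z$ are themselves algebraic numbers that can be extremely close to one another. The paper invokes separation bounds for polynomial systems, specialized to eigenvector computation, to show that one needs $\sOO(\delta^4 + \delta^3\tau + L)$ bits of working precision to isolate these coordinates correctly; Gaussian elimination at that precision then costs $\sOB\!\big(\delta^{\omega}(\delta^4 + \delta^3\tau + L)\big)$. This is where $\omega$ enters and where the bound $\sOB(\delta^{\omega+3}L)$ actually comes from, once the hypothesis $L = \Omega(\delta^3\tau)$ absorbs the lower-order terms. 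Without this precision analysis for the eigenvector step, the claimed bit complexity is not established.
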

\begin{proof}
	We can compute the characteristic polynomial of an $N \times N$ matrix $M$
	in $\sOB( N^{2.697 + 1}\lg\norm{M})$ using a randomized algorithm, see
	\cite{KalVil-det-05} and references therein.  Here $\norm{M}$ denotes
	the largest entry in absolute value.
	In our case, the elements of $\bm{C}_2$ have bitsize $\sOO(\delta\tau)$
	and its characteristic polynomial is 
	of degree $d$ and coefficient bitsize $\sOB( \delta^2 \tau)$.  We compute it in
	$\sOB(\delta^{2.697+1} \delta\tau) = \sOB(\delta^{4.697} \tau)$ and isolate all its
	real roots in $\sOB(\delta^5 + \delta^4\tau)$ \cite{Pan-opt-usolve-02};
	they correspond to the real eigenvalues of PEP. We can decrease the width of
	the isolating interval by a factor of $\epsilon = 2^{-L}$ for
	\emph{all} the roots in $\sOB(\delta^3\tau + \delta L)$ \cite{PanTsi-refine-16}.
	Thus, the overall complexity is $\sOB(\delta^5 + \delta^{4.697}\tau + \delta L)$.
	
	It remains to compute the corresponding eigenvectors.  For each
	eigenvalue $\lambda$ we may compute the corresponding eigenvector
	$\bm{z}$ by Gaussian elimination and back substitution to
	the (augmented) matrix $[ \lambda \bm{I}_{\delta} - \bm{C}_2 \,|\, \bm{0} ]$.
	This requires $\sOO(\delta^{\omega})$ arithmetic operations.  However,
	as $\lambda$ is a root of the characteristic polynomial, one has to
	operate on algebraic numbers, which is highly non-trivial,
and one needs to bound the number of bits needed to
	compute the elements of $\bm{z}$ correctly and to recover $\bm{x}$.

	Hence, we employ separation bounds for polynomial system
	adapted to eigenvector computation \cite{emt-dmm-j}.
	One needs, as in the case of eigenvalues, $\sOB(\delta^4 + \delta^3\tau)$ bits to
	isolate the coordinates of the eigenvectors.
	To decrease the width of
	the corresponding isolating intervals by a factor of $\epsilon = 2^{-L}$,
	the number of bits becomes $\sOB(\delta^4 + \delta^3\tau + L)$.
	Thus, we compute the eigenvectors in 
	$\sOB(\delta^{\omega}(\delta^4 + \delta^3\tau + L)) =
	\sOB(\delta^{\omega + 4} + \delta^{\omega + 3}\tau + \delta^{\omega} L)$.

	For the arithmetic complexity we proceed as follows:  We compute
	the characteristic polynomial in $\sOO(\delta^{2.697})$, we approximate its
	roots up to $\epsilon$ in $\sOO(\delta \lg(1/\epsilon))$.  Finally, we
	compute the eigenvectors with $\sOO(\delta^{\omega})$ arithmetic operations.
	So the overall cost is $\sOO(\delta^{2.697} + \delta \lg(1/\epsilon))$.
\end{proof}



\subsection{\membershipop} \label{sec:membership}

The operation $\membershipop(\bm{F}, \bm{p})$ decides whether a point
$\bm{p}$ lies in the interior of a spectrahedron
$S = \{ \bm{x} \in \R^n \,|\, \lmi{x} \succeq 0 \}$.  For this, first,
we construct the matrix $\lmi{\bm{p}}$.  If it is
positive definite, then $\bm{p} \in \interior{S}$.
If it is positive
semidefinite, then $\bm{p} \in \boundary{S}$, otherwise
$\bm{p} \in \R^n \setminus S$.  The pseudo-code appears in
Algorithm~\ref{alg:membershipop}.

\begin{algorithm}
	\caption{$\membershipop(\bm{F}, \bm{p})$}
	\label{alg:membershipop}
	\SetKwInOut{Input}{Input}
	\SetKwInOut{Output}{Output} 
	
	\Input{An LMI $\lmi{x} \succeq 0$ representing  a spectrahedron $S$
		and a point $\bm{p} \in \R^n$.}
	
	\Output{\textsc{true} if $\bm{p}\in S$, \textsc{false} otherwise.}
	
	\BlankLine
	
	$\lambda_{min} \leftarrow$ smallest eigenvalue of $\lmi{p}$\;
	\lIf{$\lambda_{min} \geq 0$} {
		\KwRet \textsc{true} 
	}
	\KwRet \textsc{false} \;  
\end{algorithm}

\begin{lemma}\label{lem:membership}
	Algorithm~\ref{alg:membershipop}, $\membershipop(\bm{F}, \bm{p})$, 
	requires $\sOO(n m^2 + m^{2.697})$ arithmetic operations.  If 
	$\bm{F}$ and $\bm{p}$ have integer elements of bitsize at most
	$\tau$, respectively $\sigma$, then the bit complexity is
	$\sOB((n m^2 + m^{3.697})(\tau + \sigma))$.
\end{lemma}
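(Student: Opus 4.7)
The plan is to split the work of Algorithm~\ref{alg:membershipop} into two parts --- evaluating $\lmi{p}$ at the query point $\bm{p}$, and testing the resulting $m\times m$ symmetric matrix for positive semidefiniteness --- and to bound the cost of each step separately.

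For the evaluation step, I would form $\lmi{p} = \bm{A}_0 + \sum_{i=1}^{n} p_i\bm{A}_i$ using $n$ scalar-matrix multiplications and $n$ matrix additions on $m\times m$ matrices, at an arithmetic cost of $O(nm^2)$. In the integer model, each entry of $\lmi{p}$ is a sum of $n{+}1$ products of integers of bitsizes $\tau$ and $\sigma$, hence an integer of bitsize $O(\tau+\sigma+\log n) = \sOO(\tau+\sigma)$; consequently the bit cost of assembling $\lmi{p}$ is $\sOB(nm^2(\tau+\sigma))$.

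For the semidefiniteness test I would avoid any numerical eigenvalue routine and instead go through the characteristic polynomial $\chi(\lambda)$ of $\lmi{p}$, as in Section~\ref{sec:pep}. By the Kaltofen--Villard randomized algorithm~\cite{KalVil-det-05}, $\chi$ can be computed with $\sOO(m^{2.697})$ arithmetic operations and $\sOB(m^{3.697}\lg\normi{\lmi{p}}) = \sOB(m^{3.697}(\tau+\sigma))$ bit operations, where I use that the entries of $\lmi{p}$ have bitsize $\sOO(\tau+\sigma)$ from the previous step. Because $\lmi{p}$ is symmetric, all roots of $\chi$ are real, so $\lmi{p}\succeq 0$ iff every root of $\chi$ is non-negative. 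By Vieta's formulae, this is equivalent to the coefficient sign pattern $(-1)^{m-k}\chi_k \ge 0$ for $0\le k\le m$: indeed, writing $\chi(\lambda)=\prod_i(\lambda-r_i)$ gives $\chi_k = (-1)^{m-k}e_{m-k}(r_1,\dots,r_m)$, and since $\chi$ has only real roots Descartes' rule applied to $\chi$ and to $\chi(-\cdot)$ yields the converse. Checking this sign pattern uses $O(m)$ further operations and is absorbed in the previous bounds.

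Summing the two contributions yields the claimed arithmetic bound $\sOO(nm^2 + m^{2.697})$ and bit bound $\sOB((nm^2 + m^{3.697})(\tau+\sigma))$. The delicate point is to certify ``$\lmi{p}\succeq 0$'' within the same budget as computing $\chi$ itself; this is where Vieta's sign criterion is essential, since the alternative routes --- an $O(m^3)$ exact LDL$^{\top}$ factorisation or repeated principal-minor determinants --- would incur an extra factor of roughly $m$ in the bit model and spoil the $m^{3.697}$ exponent. Using the symmetry of $\lmi{p}$ (to guarantee real spectrum) together with Kaltofen--Villard is what makes the bound tight.
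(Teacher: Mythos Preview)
Your proof is correct and follows the same two-step decomposition as the paper: assemble $\lmi{p}$ in $\OO(nm^2)$ and then reduce the semidefiniteness test to properties of the characteristic polynomial computed via Kaltofen--Villard. The one genuine difference is in how you certify ``no negative eigenvalue'': the paper invokes univariate real-root solving \cite{Pan-opt-usolve-02} or fast subresultant sign-determination \cite{LicRoy-sub-res-01,Lecerf-subres-2019} on $\chi$, whereas you use the elementary observation that, since $\lmi{p}$ is symmetric and hence $\chi$ has only real roots, Descartes' rule applied to $\chi(-\lambda)$ reduces the test to checking the coefficient sign pattern $(-1)^{m-k}\chi_k\ge 0$. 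Both routes fit inside the stated bit budget (the coefficients of $\chi$ have bitsize $\sOO(m(\tau+\sigma))$, so your sign scan costs $\sOB(m^2(\tau+\sigma))$, while the paper's subresultant step is quoted as $\sOB(m^2 n(\tau+\sigma))$; either is dominated by $m^{3.697}(\tau+\sigma)$). Your variant is more self-contained and avoids the extra citations; the paper's variant has the minor advantage of not relying on the symmetry of $\lmi{p}$ to guarantee a real spectrum.
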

\begin{proof}
	We construct $\bm{F}(\bm{p})$ in $\OO(n m^2)$. Then, with
	$\OO(m^{2.697})$ operations we compute its characteristic polynomial
	\cite{KalVil-det-05} and in $\sOO(m)$ we decide whether it has negative
	roots, for example by solving \cite{Pan-opt-usolve-02} or using fast
	subresultant algorithms \cite{Lecerf-subres-2019,LicRoy-sub-res-01}.
	For the bit complexity, the construction costs
	$\sOB(n m^2 (\tau + \sigma))$ and the computation of the characteristic
	polynomial costs $\sOB(m^{2.697+1}(\tau + \sigma))$, using a randomized
	algorithm \cite{KalVil-det-05}.  One may test for negative roots, and
	thus eigenvalues, in $\sOB(m^2n (\tau + \sigma))$
	\cite{LicRoy-sub-res-01}.
\end{proof}

\subsection{\intersectionop} \label{sec:interection}

Consider a parametric polynomial curve
$\mathcal{C}$ such that it has a non-empty intersection with a
spectrahedron $S$.  Throughout, we consider only the real trace of $\cC$. Assume that the value of the parameter $t =
0$ corresponds to a point $\bm{p}_0$, that lies in $\mathcal{C} \cap
\interior{S}$.  Furthermore, assume that the segment
of $\mathcal{C}$ on which $\bm{p}_0$ lies, intersects the boundary of
$S$ transversally at two points, say $\bm{p}_{-}$ and
$\bm{p}_{+}$.  The operation
$\intersectionop$ computes the parameters, $t_{-}$ and
$t_{+}$, corresponding to these two points.
Figure~\ref{fig:spectra-curve} illustrates this discussion and the
pseudo-code of \intersectionop appears in
Algorithm~\ref{alg:intersectionop}.

\begin{figure}[t]
	\begin{center}
		\includegraphics[scale=0.20]{./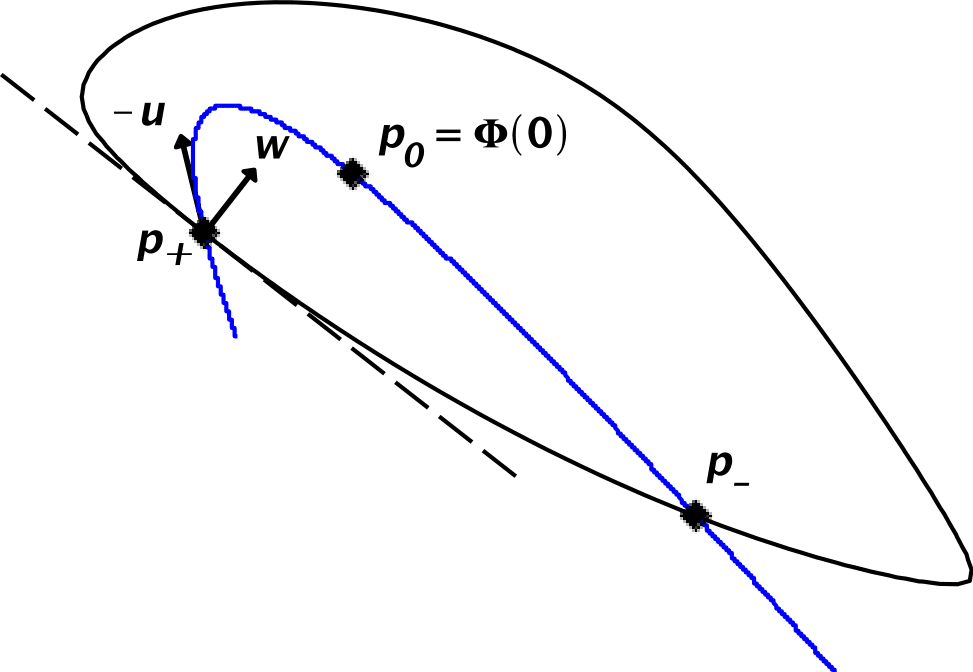}
		\caption{\emph{\textmd{A spectrahedron $S$ described by $\lmi{x}$ and a parameterized curve $\Phi$. The point $\bm{p}_0=\Phi(0)$ lies in the interior of $S$, and the points $\bm{p}_+ = \Phi(t_+)$ and $\bm{p}_-=\Phi(t_-)$ on the boundary. Vector $\bm{w}$ is the surface normal of $\boundary{S}$ at $\bm{p}_+$ and $\bm{u}$ is the direction of $\Phi$ at time $t=t_+$.}}
		\label{fig:spectra-curve}}
	\end{center}
\end{figure}



To prove correctness  and estimate the complexity we proceed as
follows: As before (see also Equation~\ref{eq:lmi}),  $S$ is the feasible region of linear matrix inequalities (LMI)
$\lmi{x} \succeq 0$.  Consider the real trace
of a polynomial curve
$\mathcal{C}$, with parameterization
\begin{equation}
\label{eq:param-curve}
\begin{array}{lllll}
\Phi &:& \R &\rightarrow& \R^n \\
& & t &\mapsto &  \Phi(t) := (p_1(t), \dots, p_n(t) ) ,
\end{array} 
\end{equation}	
where $p_i(t) = \sum_{j=0}^{d_i}p_{i,j}t^j$ are univariate polynomials
in $t$ of degree $d_i$, for $i \in [m]$.  Also let
$d = \max_{i \in [m]}\{d_i\}$.
If the coefficients of the polynomials are integers, then we further
assume that the maximum coefficient's bitsize is bounded by $\tau$.

As $t$ varies over the real line, there may be several disjoint
intervals, for which the corresponding segment of $\mathcal{C}$ lies in
$\interior{S}$.  We aim to compute the endpoints, $t_{-}$
and $t_{+}$, of a maximal such interval containing $t=0$.  Let
$\bm{p}_0 = \Phi(0)$; by assumption it holds
$\bm{F}(\Phi(0)) = \bm{F}(\bm{p}_0) \succ 0$.

The input of \intersectionop (Algorithm~\ref{alg:intersectionop})
is $\bm{F}$, the LMI representation of $S$, and $\Phi(t)$, the
polynomial parameterization of $\mathcal{C}$.  Its crux is a routine,
\PEP, that solves a polynomial eigenvalue problem.  The following
lemma exploits this relation.

\begin{algorithm}
	\caption{$\intersectionop(\bm{F}, \Phi(t))$}
	\label{alg:intersectionop}
	\SetKwInOut{Input}{Input}
	\SetKwInOut{Output}{Output}
	\SetKwInOut{Require}{Require}
	
	\Input{An LMI $\lmi{x} \succeq 0$ for a spectrahedron $S$
		and a parameterization $\Phi(t)$ of a polynomial curve
		$\mathcal{C}$}
	
	\Require { $\Phi(0) \in \interior{S}$}
	
	\Output{${t_-}, {t_+}$ s.t. $\Phi({t_-}), \Phi({{t_+}}) \in \boundary{S}$}

	\BlankLine
	$T := \{ t_1 \leq t_2 \leq \cdots \leq t_{\ell} \} \leftarrow \PEP( \bm{F}(\Phi(t)) ) $\;
	${t_-} \leftarrow \max\{t \in T \,|\, t < 0\}$; // \textit{max negative polynomial eigenvalue}
	
	${t_+} \leftarrow \min\{t \in T \,|\, t >0\}$; //  \textit{min positive polynomial eigenvalue}
	
	\KwRet $t_-, {t_+}$\;
\end{algorithm}


\begin{lemma}[\PEP and  $S \cap \cC$]
	\label{lem:intersection1}
	Consider the spectrahedron
	$S = \{ \bm{x} \in \R^n \,|\, \lmi{x} \succeq 0 \}$.  Let
	$\Phi: \R \rightarrow \R^n$ be a parameterization of a polynomial
	curve $\mathcal{C}$, such that $\Phi(0) \in \interior{S}$.  Let
	$[t_{-}, t_{+}]$ be the maximal interval containing $0$, such that
	the corresponding part of $\mathcal{C}$ lies in ${S}$.  Then,
	$t_{-}$, respectively $t_{+}$, is the maximum negative, respectively  minimum
	positive, polynomial eigenvalue of $\bm{F}(\Phi(t))\bm{x}=0$, where
	{$\bm{F}(\Phi(t)) = \bm{B}_0 + t\bm{B}_1 + \cdots + t^d\bm{B}_d$}.
\end{lemma}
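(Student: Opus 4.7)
The plan is to translate the geometric condition ``$\Phi(t)\in S$'' into a spectral condition on the matrix pencil $\bm{F}(\Phi(t))$, and then to identify the polynomial eigenvalues of this pencil with the parameter values at which $\Phi(t)$ crosses $\boundary{S}$. First, observe that $\bm{F}$ depends linearly on $\bm{x}$ and each coordinate $p_i(t)$ of $\Phi(t)$ is a univariate polynomial of degree at most $d$. Substituting and collecting powers of $t$ gives
\begin{equation*}
  \bm{F}(\Phi(t)) \;=\; \bm{A}_0 + \sum_{i=1}^{n} p_i(t)\,\bm{A}_i
                  \;=\; \bm{B}_0 + t\,\bm{B}_1 + \cdots + t^d\,\bm{B}_d,
\end{equation*}
with symmetric matrices $\bm{B}_j \in \R^{m\times m}$ obtained by rearranging the $p_{i,j}$'s against the $\bm{A}_i$'s. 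For each real $t$, $\bm{F}(\Phi(t))$ is a real symmetric matrix, so its eigenvalues are real.

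Next I would characterize membership in $S$ spectrally: $\Phi(t)\in S$ iff $\bm{F}(\Phi(t))\succeq 0$, i.e.\ iff its smallest eigenvalue $\lambda_{\min}(t)$ is nonnegative, and $\Phi(t)\in\interior{S}$ iff $\lambda_{\min}(t)>0$. Consequently $\Phi(t)\in\boundary{S}$ iff $\lambda_{\min}(t)=0$, which is equivalent to $\bm{F}(\Phi(t))$ being singular, i.e.\ to the existence of a nonzero $\bm{x}\in\R^m$ with $\bm{F}(\Phi(t))\bm{x}=0$. By definition, such pairs $(t,\bm{x})$ are precisely the (real) polynomial eigenpairs of the matrix polynomial $\bm{B}_0+t\bm{B}_1+\cdots+t^d\bm{B}_d$, so $\Phi(t)\in\boundary{S}$ iff $t$ is a real polynomial eigenvalue.

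It remains to identify the endpoints of the maximal interval $[t_-,t_+]\ni 0$ with the claimed extremal eigenvalues. Since the entries of $\bm{F}(\Phi(t))$ are continuous (indeed polynomial) in $t$, the function $\lambda_{\min}(t)$ is continuous, and by hypothesis $\lambda_{\min}(0)>0$. Let $t_+^\ast$ denote the smallest positive polynomial eigenvalue and $t_-^\ast$ the largest negative one; on the open interval $(t_-^\ast,t_+^\ast)$ the matrix $\bm{F}(\Phi(t))$ is nonsingular, and by continuity of $\lambda_{\min}$ together with $\lambda_{\min}(0)>0$ and the intermediate value theorem, $\lambda_{\min}(t)>0$ throughout this interval. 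Thus $\Phi\bigl((t_-^\ast,t_+^\ast)\bigr)\subset\interior{S}$, while $\Phi(t_\pm^\ast)\in\boundary{S}$, so $(t_-^\ast,t_+^\ast)$ is exactly the maximal interval about $0$ whose $\Phi$-image lies in $S$, giving $t_\pm=t_\pm^\ast$.

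The main obstacle I anticipate is the continuity argument in the last step: one has to rule out the possibility that $\lambda_{\min}(t)$ dips to $0$ (or below) strictly between $0$ and the first polynomial eigenvalue on either side. This is handled by the equivalence ``$\lambda_{\min}(t)=0\Leftrightarrow t$ is a real polynomial eigenvalue,'' which follows from the symmetry of $\bm{F}(\Phi(t))$ and the fact that real eigenvalues of a singular real symmetric matrix produce genuine kernel vectors in $\R^m$ (not merely in $\C^m$); the transversality hypothesis at $\bm{p}_\pm$ ensures $\lambda_{\min}(t)$ actually changes sign at $t_\pm$ rather than merely touching zero, which in turn justifies that $[t_-,t_+]$ is the maximal such interval.
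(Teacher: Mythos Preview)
Your proposal is correct and follows essentially the same route as the paper: rewrite $\bm{F}(\Phi(t))$ as a matrix polynomial $\bm{B}_0+t\bm{B}_1+\cdots+t^d\bm{B}_d$, note that $\Phi(t)\in S$ iff this matrix is positive semidefinite, and identify the first parameter values on either side of $0$ where positive definiteness is lost with the extremal real polynomial eigenvalues. The only cosmetic difference is that the paper tracks the determinant $\theta(t)=\det\bm{F}(\Phi(t))$ rather than $\lambda_{\min}(t)$, and your explicit remark about needing transversality to rule out a tangential touch (where $\lambda_{\min}$ hits zero but the curve does not exit $S$) is in fact more careful than the paper's own argument, which leaves that hypothesis implicit.
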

\begin{proof}
	The composition of $\lmi{x}$ and $\Phi(t)$ gives
	\begin{equation}
	\label{eq:FCt}
	\bm{F}(\Phi(t)) = \bm{A}_0 + p_1(t) \bm{A}_1 + \cdots + p_n(t) \bm{A}_n.
	\end{equation}
	We rewrite Expression~(\ref{eq:FCt}) by grouping the coefficients for
	each $t^i$, $i \in [d]$, then
	\begin{equation}
	\label{eq:FCBt}
	\bm{F}(\Phi(t)) = \bm{B}_0 + t\bm{B}_1 + \cdots + t^d\bm{B}_d ,
	\end{equation}
	where $\bm{B}_k = \sum_{j=0}^{n}p_{j,k}\, \bm{A}_j$, for
	$0 \leq k \leq d$.  We use the convention that $p_{j,k} = 0$, when
	$k > d_j$.
	
	For $t=0$, it holds, by assumption, that
	$\bm{F}(\Phi(0)) = \bm{B}_0 \succ 0$: point $\Phi(0)$
	lies in the interior of $S$.  Actually, for any
	$\bm{x} \in \interior{S}$ it holds $\lmi{x} \succ 0$.  On the
	other hand, if $\bm{x} \in \boundary{S}$, then
	$\lmi{x} \succeq 0$.  Our goal is to compute the maximal interval
	$[t_{-}, t_{+}]$ that contains $0$ and for every $t$ in it, we
	have $\bm{F}(\Phi(t)) \succeq 0$.

	Starting from point $\Phi(0)$, by varying $t$, we
	move on the trajectory that $\cC$ defines (in both directions)
	until we hit the boundary of $S$.  When we hit $\boundary{S}$, 
	matrix $\bm{F}(\Phi(t))$ is not strictly definite anymore.  Thus,
	its determinant vanishes.
	
	Consider the function $\theta : \R \rightarrow \R$, where
	$\theta(t) = \det \bm{F}(\Phi(t))$ is a univariate polynomial
	in $t$. If a point $\Phi(t)$ is on the boundary of the
	spectrahedron, then $\theta(t) = 0$. We opt to compute 
	$t_{-}$ and $t_{+}$, such that
	$t_{-} \leq 0 \leq t_{+}$ and
	$\theta(t_{-})= \theta(t_{+}) = 0$.
	At $t=0$, $\theta(0) > 0$ and the
	graph of $\theta$ is above the $t$-axis. So $\cC$ intersects the
	boundary when the graph of $\theta$ touches the $t$-axis for the
	first time at $t_1 \leq 0 \leq t_2$. It follows that $t_{-}=t_1$
	and $t_{+}=t_2$ are the maximum negative and minimum positive
	roots of $\theta$, or equivalently the corresponding polynomial
	eigenvalues of $\bm{F}(\Phi(t))$.
\end{proof}


\begin{lemma}\label{lem:intersection2}
  Algorithm~\ref{alg:intersectionop}, $\intersectionop(\bm{F}, \Phi(t))$,
uses $\sOO((m d)^{2.697} + m d \lg{L})$ arithmetic operations in order to
  compute the intersection, up to precision $\epsilon = 2^{-L}$, of an LMI,
  $\bm{F}$, consisting of $n$ matrices of dimension $m \times m$ with a
  parametric curve, $\Phi(t)$, of degree $d$.
\end{lemma}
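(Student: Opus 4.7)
The plan is to combine correctness from Lemma~\ref{lem:intersection1} with the complexity of PEP from Lemma~\ref{lem:pep}. Correctness is free: Lemma~\ref{lem:intersection1} already certifies that $t_-$ and $t_+$ are exactly the maximum negative and minimum positive polynomial eigenvalues of $\bm{F}(\Phi(t))\bm{x}=0$, so Algorithm~\ref{alg:intersectionop} returns the right quantities as soon as the \PEP subroutine does. Therefore the whole argument is a bookkeeping of arithmetic operations across three phases: (a) forming the matrix polynomial $\bm{F}(\Phi(t))$, (b) solving the resulting PEP of degree $d$ with $m\times m$ matrices, and (c) extracting the two relevant eigenvalues from the returned list.

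For phase (a), I would write $\bm{F}(\Phi(t))=\bm{B}_0+t\bm{B}_1+\cdots+t^d\bm{B}_d$ as in Equation~(\ref{eq:FCBt}) of the previous proof, with $\bm{B}_k=\sum_{j=0}^n p_{j,k}\bm{A}_j$. Each $\bm{B}_k$ is obtained by $O(nm^2)$ scalar multiplications and additions, and there are $d+1$ of them; this preprocessing contributes $O(n m^2 d)$ arithmetic operations, which is absorbed into the final bound under the standing assumption $n = O((md)^{1.697})$ that is natural for the intended regime (otherwise one simply adds an $nm^2 d$ term; I would note this mild caveat).

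For phase (b), I would apply Lemma~\ref{lem:pep} with the parameters $\delta = m d$ and target precision $\epsilon = 2^{-L}$. The arithmetic bound stated there is $\sOO(\delta^{2.697}+\delta\lg(1/\epsilon))$, which specialises directly to $\sOO((md)^{2.697}+md\,L)$ for the eigenvalues, and the eigenvectors are not needed here since Algorithm~\ref{alg:intersectionop} only inspects the real parametric values $t_i$. For phase (c), scanning the $\ell \le md$ real polynomial eigenvalues to pick the largest negative and the smallest positive value costs $O(md)$ comparisons, which is subsumed by the previous term.

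Adding the three phases gives $\sOO((md)^{2.697}+md\,L)$, matching (up to the difference between $L$ and $\lg L$ in the statement, which I would point out and align with what Lemma~\ref{lem:pep} actually provides) the claim of Lemma~\ref{lem:intersection2}. The main obstacle I foresee is not algorithmic but bookkeeping: one must check that the reduction from $\bm{F}(\Phi(t))$ to the companion pencil used inside \PEP does not increase the effective degree or dimension beyond $\delta=md$, and that the invertibility hypothesis $\bm{B}_0,\bm{B}_d$ nonsingular — needed by Lemma~\ref{lem:pep} — holds at least generically under our setup ($\bm{B}_0=\bm{F}(\Phi(0))\succ 0$ by the \textsc{Require} clause; $\bm{B}_d$ nonsingular generically in the coefficients of $\Phi$). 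Once these hypotheses are verified, the arithmetic count follows mechanically from Lemma~\ref{lem:pep}.
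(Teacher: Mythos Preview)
Your approach is essentially the same as the paper's: construct the PEP $\bm F(\Phi(t))=\bm B_0+\cdots+t^d\bm B_d$ at cost $O(dnm^2)$ and then invoke Lemma~\ref{lem:pep} with $\delta=md$. The paper's proof is in fact terser than yours and simply asserts that the solving phase ``dominates,'' without your phase~(c) or the correctness cross-reference.

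Your two caveats are well taken and actually sharper than the paper. First, the $dnm^2$ construction cost is \emph{not} absorbed by $(md)^{2.697}$ in general; indeed the paper's own Lemma~\ref{lem:reflection} keeps an explicit $dnm^2$ term, so the omission here is an inconsistency in the statement of Lemma~\ref{lem:intersection2} rather than something you need to justify away with an assumption like $n=O((md)^{1.697})$. Second, you are right that Lemma~\ref{lem:pep} yields $\sOO(\delta^{2.697}+\delta\lg(1/\epsilon))=\sOO((md)^{2.697}+md\cdot L)$, not $md\lg L$; the $\lg L$ in the statement appears to be a typo propagated through Table~\ref{tab:walks} as well. Your remarks on the invertibility hypotheses ($\bm B_0\succ 0$ from the \textsc{Require} clause, $\bm B_d$ generically nonsingular) are a useful addition that the paper leaves implicit.
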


\begin{proof}
  We have to construct PEP and solve it. Since $\Phi(t)$ has degree $d$, then
  {$\bm{F}(\Phi(t)) = \bm{B}_0 + t\bm{B}_1 + \cdots + t^d\bm{B}_d$} is a PEP of
  degree $d$.
  This construction costs $\OO(dnm^2)$ operations. The solving phase, using
  Lemma~\ref{lem:pep}, requires $\sOO((m d)^{2.697} + m d \lg{L})$ arithmetic
  operations and dominates the complexity bound of the algorithm.
\end{proof}

\subsection{\reflectionop}
\label{sec:reflections}

\begin{algorithm}
	\caption{\reflectionop$(\bm{F}, \Phi(t))$}
	\label{alg:reflectionop}
	\SetKwInOut{Input}{Input}
	\SetKwInOut{Output}{Output}
	\SetKwInOut{Require}{Require}

	\Input{An LMI $\lmi{x} \succeq 0$ for a spectrahedron $S$
		and a parameterization $\Phi(t)$ of a polynomial curve
		$\mathcal{C}$.}

	\Require { (i) $\Phi(0) \in \interior{S}$ \\
		(ii) $\cC$ intersects $\boundary{S}$ transversally at a smooth
		point.}

	\Output{$t_{+}$ such that $\Phi(t_{+}) \in \boundary{S}$ and the
		direction of the reflection, $\bm{s}_{+}$, at this point.  }

	\BlankLine

	$t_{-}, t_{+} \leftarrow$ \intersectionop$(\bm{F}, \Phi(t))$\;
	$\bm{w} \leftarrow \nabla\det\bm{F}(\Phi(t_{+}))$\;
	$\bm{w} \gets \tfrac{\bm{w}}{\norm{\bm{w}}}$ ; //
	\textit{Normalize $\bm{w}$}

	$\bm{s}_{+} \leftarrow \frac{d\Phi}{dt}(t_{+}) - 2 \, \langle
	\nabla\frac{d\Phi}{dt}(t_{+}),\bm{w} \rangle \, \bm{w}$\; \KwRet
	$t_{+}$, $\bm{s}_{+}$\;

\end{algorithm}

The \reflectionop operation (Algorithm~\ref{alg:reflectionop}) takes as input an LMI
 representation, $\bm{F}$,  of a spectrahedron $S$ and a polynomial curve $\cC$,
given by a parameterization $\Phi$.
Assume that $t = 0$ corresponds to a point
$\Phi(0) \in \interior{S} \cap \cC$.  Starting from $t=0$, we increase
$t$ along the positive real semi-axis.  As $t$ changes, we move along
the curve $\cC$ through $\Phi(t)$, until we hit the boundary of $S$ at
the point $\bm{p}_{+} := \Phi(t_+) \in \boundary{S}$, for some
$t_{+} > 0$.  Then, a specular reflection occurs at this point with
direction $\bm{s}_+$; this is the reflected direction.  We output
$t_{+}$ and $\bm{s}_{+}$.  Figure~\ref{fig:spectra-curve} depicts the
procedure.

The boundary of $S$, $\boundary{S}$,
with respect to the Euclidean topology,
is a subset of the real algebraic set
$\{ \bm{x} \in \R^n \,|\, \det(\lmi{x}) = 0\}$.
The latter is a real hypersurface defined by one (determinantal)
equation.
For any $\bm{x} \in \boundary{S}$ we have $\rank{\bm{F}(\bm{x})} \leq m-1$.
We assume that $\bm{p}_{+} = \Phi(t_{+})$ is such that  $\rank{\bm{F}(\bm{p}_{+})} = m-1$.
The normal direction at a point $\bm{p}\in \boundary{S}$, is the
gradient of $\det\lmi{p}$.

We compute the reflected direction using the following formula
\begin{equation}
\label{eq:reflection}
\bm{s}_{+} = \bm{u} - \tfrac{2}{\vert \bm{w}\vert^2}\langle \bm{u}, \bm{w}\rangle \,\bm{w} ,
\end{equation}
where $\bm{w}$ is the normalized gradient vector at the point
$\Phi(t_{+})$
and
$\bm{u} = \frac{d\Phi}{dt}(t_+)$ is the direction of the trajectory at
this point.  We illustrate the various vectors in
Figure~\ref{fig:spectra-curve}.

\begin{lemma}[Gradient of $\det\lmi{x}$]
	\label{lemma:gradient}
	Assume that $\bm{x} \in \boundary{S}$ and the rank of the
	$m \times m$ matrix $\lmi{x}$ is $m - 1$. Then
	\begin{equation}
	\nabla \det (\lmi{x}) =
	c \cdot (\bm{v}^\top \bm{A}_1 \bm{v}, \cdots, \bm{v}^\top \bm{A}_n \bm{v}) ,
	\end{equation}
	where $c = \frac{\mu(\lmi{x})}{\vert\bm{v}\vert^2}$, $\mu(\lmi{x})$
	is the product of the nonzero eigenvalues of $\lmi{x}$, and $\bm{v}$
	is a non-trivial vector in the kernel of $\lmi{x}$.  If
	$\rank{\bm{F}(\bm{x})} \leq m-2$, then the gradient is the zero.
\end{lemma}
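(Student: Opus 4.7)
The plan is to apply Jacobi's formula for the derivative of a determinant together with a structural description of the adjugate of a symmetric matrix of corank one. First I would recall Jacobi's identity: for any differentiable family of matrices $\bm{M}(t)$,
\[
\tfrac{d}{dt}\det \bm{M}(t) = \trace{\mathrm{adj}(\bm{M}(t))\,\tfrac{d\bm{M}}{dt}},
\]
where $\mathrm{adj}(\cdot)$ is the classical adjugate. Since $\tfrac{\partial \bm{F}}{\partial x_i} = \bm{A}_i$ by the linearity in (\ref{eq:lmi}), this immediately yields
\[
\tfrac{\partial}{\partial x_i}\det \bm{F}(\bm{x}) \;=\; \trace{\mathrm{adj}(\bm{F}(\bm{x}))\,\bm{A}_i}, \qquad i=1,\dots,n.
\]

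Next I would give a closed form for $\mathrm{adj}(\bm{F}(\bm{x}))$ under the rank hypothesis $\rank{\bm{F}(\bm{x})}=m-1$. Since $\bm{F}(\bm{x})$ is symmetric, diagonalize it in an orthonormal basis as $\sum_{j=1}^{m}\lambda_j\,\bm{u}_j\bm{u}_j^{\top}$ with $\lambda_1=0$ and $\bm{u}_1=\bm{v}/\norm{\bm{v}}$. From $\mathrm{adj}(\bm{F}(\bm{x}))\,\bm{F}(\bm{x}) = \det(\bm{F}(\bm{x}))\,\bm{I}_m = \bm{0}$, the adjugate annihilates $\bm{u}_2,\dots,\bm{u}_m$, so it must be a rank-one symmetric matrix of the form $\alpha\,\bm{u}_1\bm{u}_1^{\top}$. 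To identify $\alpha$, apply Jacobi to the perturbation $\bm{F}(\bm{x})+\varepsilon\,\bm{u}_1\bm{u}_1^{\top}$, whose determinant equals $\varepsilon\prod_{j\geq 2}\lambda_j+O(\varepsilon^{2})$; comparing coefficients in $\varepsilon$ gives $\alpha = \prod_{j\geq 2}\lambda_j = \mu(\bm{F}(\bm{x}))$. Hence
\[
\mathrm{adj}(\bm{F}(\bm{x})) \;=\; \frac{\mu(\bm{F}(\bm{x}))}{\norm{\bm{v}}^{2}}\,\bm{v}\bm{v}^{\top}.
\]

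Plugging this into the Jacobi identity and using $\trace{\bm{v}\bm{v}^{\top}\bm{A}_i} = \bm{v}^{\top}\bm{A}_i\bm{v}$ produces
\[
\tfrac{\partial}{\partial x_i}\det \bm{F}(\bm{x}) \;=\; \frac{\mu(\bm{F}(\bm{x}))}{\norm{\bm{v}}^{2}}\,\bm{v}^{\top}\bm{A}_i\bm{v},
\]
which, assembled over $i=1,\dots,n$, is exactly the claimed gradient formula with $c=\mu(\bm{F}(\bm{x}))/\norm{\bm{v}}^{2}$. Finally, for the case $\rank{\bm{F}(\bm{x})}\leq m-2$, every $(m-1)\times(m-1)$ minor of $\bm{F}(\bm{x})$ vanishes, so $\mathrm{adj}(\bm{F}(\bm{x}))=\bm{0}$, and the Jacobi identity then forces every partial derivative, hence the whole gradient, to vanish.

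The main subtlety, and where I would be careful, is the identification of the scalar $\alpha$; all other steps are essentially bookkeeping. The symmetry of $\bm{F}(\bm{x})$ is what pins down the adjugate as a scalar multiple of $\bm{v}\bm{v}^{\top}$ (rather than a general rank-one matrix $\bm{v}\bm{w}^{\top}$), and the perturbation argument converts the combinatorial definition of the adjugate into the clean product-of-nonzero-eigenvalues expression $\mu(\bm{F}(\bm{x}))$ that appears in the statement.
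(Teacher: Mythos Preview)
Your proof is correct and follows essentially the same route as the paper: both compute $\partial_{x_i}\det\bm{F}(\bm{x})=\trace{\mathrm{adj}(\bm{F}(\bm{x}))\,\bm{A}_i}$ via Jacobi's formula, then identify the adjugate of a symmetric corank-one matrix as $\tfrac{\mu(\bm{F}(\bm{x}))}{\norm{\bm{v}}^{2}}\bm{v}\bm{v}^{\top}$, and finally note that the adjugate vanishes when the corank is at least two. The only real difference is that the paper imports the adjugate formula for a general rank-$(m-1)$ matrix from an external reference and then specializes to the symmetric case, whereas you derive it directly for symmetric matrices via the spectral decomposition and a clean perturbation argument; your version is a bit more self-contained but otherwise identical in spirit.
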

\begin{proof}
	From Lemma \ref{lem:composition}:
	\begin{align}\label{eq:appendix1}
	\frac{\partial \det\lmi{x}}{\partial x_k} = \trace{\adjoint{\lmi{x}}\bm{A}_k}.
	\end{align}
	
	If $\rank{\lmi{x}}\leq - 2$, then $\adjoint{\lmi{x}}$ is the zero matrix.
	If we assume that $\rank{\lmi{x}}=m-1$, then using Lemma \ref{adjoint}:
	\begin{align*}
	  \trace{\adjoint{\lmi{x}} \bm{A}_k} & = \trace{\mu(\lmi{x}) \frac{\bm{v}\bm{u}^\top}{\bm{u}^\top \bm{v}} \bm{A}_k} \\
	  & = \frac{\mu(\lmi{x})}{\bm{u}^\top \bm{v}} \cdot \trace{\bm{v}\bm{u}^\top \bm{A}_k} = \frac{\mu(\lmi{x})}{\bm{u}^\top \bm{v}} \cdot \bm{u}^\top \bm{A}_k \bm{v}.
	\end{align*}
	However, since $\lmi{x}$ is symmetric, we can choose $\bm{v} = \bm{u}$, so:
	\begin{displaymath}
	\frac{\mu(\lmi{x})}{\bm{u}^\top \bm{v}} \cdot \bm{u}^\top \bm{A}_k \bm{v} = \frac{\mu({\lmi{x})}}{|\bm{v}|^2} \cdot \bm{v}^\top \bm{A}_k \bm{v} .
	\end{displaymath}
\end{proof}

The algorithm \reflectionop exploits Lemma~\ref{lemma:gradient}. Nevertheless,
it is not necessary to perform all then computations that the lemma indicates.
For example, because we will normalize the resulting vector and we do not need its actual
direction (internal or external), we can omit the computation of $c$. Moreover,
the nonzero vector $\bm{v}$, which satisfies $\lmi{p}\bm{v}=0$, corresponds to the
eigenvector w.r.t. the eigenvalue $t_+$ from the \PEP
(Lemma~\ref{lem:intersection1}). This is true  because
$\bm{p} = \Phi(t_+)\in \boundary{S}$ and thus $\det\bm{F}(\Phi(t_+))= 0$.

At this point we should note that we  compute the eigenvalues of \PEP up to some precision. Since
the matrix-vector multiplication is backward stable, a small perturbation
on $\bm{v}$ does not affect the computation of each coordinate of
$\nabla \det (\lmi{x})$ \cite[p.~104]{Trefethen97}.  We quantify the
accuracy of the computed $\nabla \det (\lmi{x})$ using floating point
arithmetic as follows:
\begin{lemma}\label{lem:grad_ac}
	The relative error of each coordinate of the gradient given in Lemma~\ref{lemma:gradient}
	when we compute it using floating point arithmetic with machine
	epsilon $\epsilon_{M}$ is
	$\OO(\frac{\epsilon_{M}}{\sigma_{\max}(A_i)})$, for
	$i \in [n]$,
	where $\sigma_{\max}$ is the largest singular value of $A_i$.
\end{lemma}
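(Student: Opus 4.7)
The plan is to reduce the claim to the standard first-order floating-point error analysis of the bilinear form $\bm{v}^\top \bm{A}_i \bm{v}$. By Lemma~\ref{lemma:gradient}, the $i$-th coordinate of the gradient equals $c \cdot \bm{v}^\top \bm{A}_i \bm{v}$ where the scalar $c = \mu(\bm{F}(\bm{x}))/\|\bm{v}\|^2$ is the \emph{same} for every $i$. Because \reflectionop normalizes the gradient before using it (see Algorithm~\ref{alg:reflectionop}), the scalar $c$ drops out of each coordinate of the final, normalized gradient, so it is enough to control the error incurred when evaluating $g_i := \bm{v}^\top \bm{A}_i \bm{v}$ in floating point.

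Next, I would invoke the classical backward-stability estimates for the two operations involved (see \cite{Trefethen97}). The floating-point product $\widehat{\bm{A}_i \bm{v}}$ satisfies $\widehat{\bm{A}_i \bm{v}} = (\bm{A}_i + \bm{E})\bm{v}$ with $\|\bm{E}\|_2 = \OO(m \epsilon_M)\,\|\bm{A}_i\|_2 = \OO(m \epsilon_M)\,\sigma_{\max}(\bm{A}_i)$, and the ensuing inner product introduces only a multiplicative perturbation of the form $1 + \OO(m \epsilon_M)$. Chaining these two estimates and using $\|\bm{v}\|_2 = 1$ (we may normalize the computed eigenvector at no cost) yields the forward bound
\[
|\widehat{g}_i - g_i| = \OO(\epsilon_M\,\sigma_{\max}(\bm{A}_i)).
\]
Dividing this absolute error by the natural scale at which each coordinate of the normalized gradient lives, namely $\sigma_{\max}(\bm{A}_i)^2$ (the worst-case amplitude of $\bm{v}^\top \bm{A}_i \bm{v}$ after the global factor $c$ is absorbed), delivers the asserted $\OO(\epsilon_M/\sigma_{\max}(\bm{A}_i))$ relative bound per coordinate.

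The main obstacle is that $\bm{v}$ itself is not computed exactly: the \PEP routine returns $\widehat{\bm{v}} = \bm{v} + \delta\bm{v}$ with $\|\delta\bm{v}\|_2 = \OO(\epsilon_M)$ rather than an exact kernel vector of $\bm{F}(\bm{p}_{+})$, and one must make sure this error does not propagate in an uncontrolled way. To close the gap I would perform the first-order expansion
\[
\widehat{\bm{v}}^\top \bm{A}_i\, \widehat{\bm{v}} \;=\; \bm{v}^\top \bm{A}_i \bm{v} \;+\; 2\,\bm{v}^\top \bm{A}_i\, \delta\bm{v} \;+\; \delta\bm{v}^\top \bm{A}_i\, \delta\bm{v},
\]
and bound the two correction terms by $2\,\sigma_{\max}(\bm{A}_i)\|\delta\bm{v}\|_2$ and $\sigma_{\max}(\bm{A}_i)\|\delta\bm{v}\|_2^2$ respectively; both fit within the $\OO(\epsilon_M\,\sigma_{\max}(\bm{A}_i))$ absolute budget derived in the previous paragraph. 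The backward stability of the matrix-vector product --- precisely the property stressed in the sentence preceding the statement --- is what guarantees that threading $\delta\bm{v}$ through the computation does not amplify the relative error beyond the claimed rate, which completes the argument.
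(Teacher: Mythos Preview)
Your route differs from the paper's and has a gap at the key conversion from absolute to relative error.

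\medskip
\noindent\textbf{What the paper does instead.} The paper does not carry out a direct forward-error computation. It treats each coordinate as the image of the map $f:\bm{v}\mapsto \bm{v}^{\top}\bm{A}_i\bm{v}$, computes its \emph{relative condition number}
\[
\kappa(\bm{v})\;=\;\frac{\|J(\bm{v})\|}{\,|f(\bm{v})|/\|\bm{v}\|\,}\;=\;\frac{2\|\bm{A}_i\bm{v}\|}{\bm{v}^{\top}\bm{A}_i\bm{v}}\;=\;\frac{2}{\sigma_{\max}(\bm{A}_i)},
\]
and then invokes the black-box result (Theorem~15.1 in \cite{Trefethen97}) that a backward-stable algorithm produces relative forward error $\OO(\kappa(\bm{v})\,\epsilon_M)$. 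So the paper's argument is ``condition number $+$ backward stability'', not an explicit componentwise bookkeeping.

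\medskip
\noindent\textbf{The gap in your argument.} Your absolute bound $|\widehat{g}_i-g_i|=\OO(\epsilon_M\,\sigma_{\max}(\bm{A}_i))$ is fine, but the next step --- dividing by $\sigma_{\max}(\bm{A}_i)^{2}$ and calling that the ``natural scale'' of $\bm{v}^{\top}\bm{A}_i\bm{v}$ --- is not justified. For a unit vector $\bm{v}$ and symmetric $\bm{A}_i$ one has $|\bm{v}^{\top}\bm{A}_i\bm{v}|\le \sigma_{\max}(\bm{A}_i)$, not $\sigma_{\max}(\bm{A}_i)^{2}$; the common scalar $c$ cancels in the normalization and does not contribute an extra factor of $\sigma_{\max}(\bm{A}_i)$. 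With the denominator you actually have, the division yields only $\OO(\epsilon_M)$, not $\OO(\epsilon_M/\sigma_{\max}(\bm{A}_i))$. To recover the stated bound along your lines you would need an \emph{lower} bound on $|\bm{v}^{\top}\bm{A}_i\bm{v}|$ of order $\sigma_{\max}(\bm{A}_i)^{2}$, and no such bound holds for a generic eigenvector $\bm{v}$. The paper sidesteps this by packaging the ratio $\|\bm{A}_i\bm{v}\|/(\bm{v}^{\top}\bm{A}_i\bm{v})$ directly into the condition number and evaluating it, rather than bounding numerator and denominator separately.
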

\begin{proof}
	Let $A\in\mathbb{R}^{m\times m}$ be a symmetric matrix and consider
	the map $f: \bm{v} \mapsto \bm{v}^TA\bm{v}$. The relative condition
	number of $f$ as defined in~\cite[p.~90]{Trefethen97} is
	
	\begin{align*}
	k(\bm{v}) = \frac{||J(\bm{v})||}{||f(\bm{v})||/||\bm{v}||} = 2\frac{||A\bm{v}||}{\bm{v}^TA_i\bm{v}} = 2\frac{\sigma_{\max}(A)}{\sigma_{\max}^2(A)} = \frac{2}{\sigma_{\max}(A)} ,
	\end{align*}
	where $J(\cdot)$ is the Jacobian of $f$. According to Theorem 15.1 in \cite[p.~111]{Trefethen97}, since the matrix-vector
	multiplication is backward stable, the relative error of each coordinate in
	the gradient computation of Lemma~\ref{lemma:gradient} is
	$\OO(\frac{\epsilon_{M}}{\sigma_{\max}(A_i)})$, for $1 \leq i \leq n$.
\end{proof}

\begin{lemma}\label{lem:reflection}
  Let $S$ be a spectrahedron represented by an LMI, $\lmi{x}$, consisting of $n$
  matrices of dimension $m \times m$. Also let $\cC$ be a parametric curve with
  parameterization $\Phi(t)$, involving polynomials of degree at most $d$.
  Algorithm~\ref{alg:reflectionop}, $\reflectionop(\bm{F}, \Phi(t))$,
  computes the intersection, up to precision $\epsilon = 2^{-L}$, of $S$ with $\cC$,
  and the reflection of $\cC$ at $\boundary{S}$, by performing
  $\sOO((m d)^{2.697}~+~m d \lg{L}~+~dnm^2)$ arithmetic operations.
\end{lemma}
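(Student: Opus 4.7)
The plan is to account separately for each line of Algorithm~\ref{alg:reflectionop} and show that the sum of the individual costs matches the claimed bound, with the intersection step supplying the heavy $\sOO((md)^{2.697} + md\lg{L})$ contribution via Lemma~\ref{lem:intersection2} and the remaining linear-algebra work contributing at most $\OO(dnm^2)$.

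The first line is a direct invocation of Algorithm~\ref{alg:intersectionop}, so Lemma~\ref{lem:intersection2} delivers $t_{+}$ at precision $\epsilon = 2^{-L}$ together with the PEP eigenvector associated with $t_{+}$. The key observation I would exploit is that, under the transversality/smoothness hypothesis of Algorithm~\ref{alg:reflectionop}, $\rank{\bm{F}(\Phi(t_{+}))} = m-1$, so the kernel of $\bm{F}(\Phi(t_{+}))$ is one-dimensional and the PEP eigenvector already produced inside \intersectionop is exactly the vector $\bm{v}$ required by Lemma~\ref{lemma:gradient}. This avoids any additional kernel extraction, which is what keeps the gradient step inexpensive.

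Next I would charge the gradient, normalization, and reflection-formula evaluation. Using Lemma~\ref{lemma:gradient} and the remark that the scalar $c$ is irrelevant because $\bm{w}$ is renormalized, it suffices to form the $n$ quadratic forms $\bm{v}^{\top}\bm{A}_i\bm{v}$, at $\OO(m^2)$ operations each, for a total of $\OO(nm^2)$; normalization of $\bm{w}$ adds $\OO(n)$. Evaluating the tangent $\tfrac{d\Phi}{dt}(t_{+})$ reduces to differentiating the $n$ component polynomials (each of degree at most $d$) and evaluating them at $t_{+}$ by Horner, at cost $\OO(nd)$. Assembling $\bm{s}_{+}$ via the reflection formula~(\ref{eq:reflection}) is then $\OO(n)$ inner-product work.

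Summing these contributions yields $\sOO((md)^{2.697} + md\lg{L}) + \OO(nm^2 + nd)$. The $\OO(dnm^2)$ term in the target bound comfortably absorbs both the $\OO(nm^2)$ from the gradient evaluation and the $\OO(dnm^2)$ construction cost of $\bm{F}(\Phi(t))$ already present inside \intersectionop (see the proof of Lemma~\ref{lem:intersection2}); the $\OO(nd)$ tangent cost is subsumed as well. I do not anticipate a genuine obstacle: the only non-bookkeeping step is justifying that the PEP eigenvector from line~1 coincides with the kernel vector used in Lemma~\ref{lemma:gradient}, and that identification is exactly where the ``smooth transversal'' requirement of Algorithm~\ref{alg:reflectionop} is consumed.
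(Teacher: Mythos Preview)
Your proposal is correct and follows essentially the same approach as the paper: bound the cost of \intersectionop\ via Lemma~\ref{lem:intersection2}, observe that the PEP eigenvector already furnishes the kernel vector $\bm{v}$ for Lemma~\ref{lemma:gradient}, charge $\OO(nm^2)$ for the $n$ quadratic forms, and absorb everything into $\sOO((md)^{2.697} + md\lg{L} + dnm^2)$. Your write-up is in fact slightly more careful than the paper's own proof in that you explicitly account for the $\OO(dnm^2)$ construction cost of $\bm{F}(\Phi(t))$ hidden inside \intersectionop\ and for the tangent evaluation, and you make explicit where the smooth-transversal hypothesis is consumed.
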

\begin{proof}
  By inspecting Algorithm~\ref{alg:reflectionop} we notice that the complexity of the algorithm
  depends on the construction of $\nabla \det (\lmi{x})$
  and the call to \intersectionop.

  To compute $\nabla \det (\lmi{x})$ we just need to
  compute
  $(\bm{v}^\top \bm{A}_1 \bm{v}, \cdots, \bm{v}^\top \bm{A}_n \bm{v})$.
  If we have already computed $\bm{v}$, then this computation requires
  $\OO(nm^2)$ operations. The computation of the derivative of $\Phi(t)$ is
  straightforward, as  $\Phi$ is a univariate polynomial.
  Taking into account the complexity of \intersectionop, the total complexity for \reflectionop is $\sOO((m d)^{2.697} + m d \lg{L} + dnm^2 + nm^2)= \sOO((m d)^{2.697}~+~m d \lg{L}~+~dnm^2)$.
\end{proof}

\subsection{An example in 2D}
\begin{figure}[t]
	\begin{center}
		\centering
		\includegraphics[scale=0.19]{./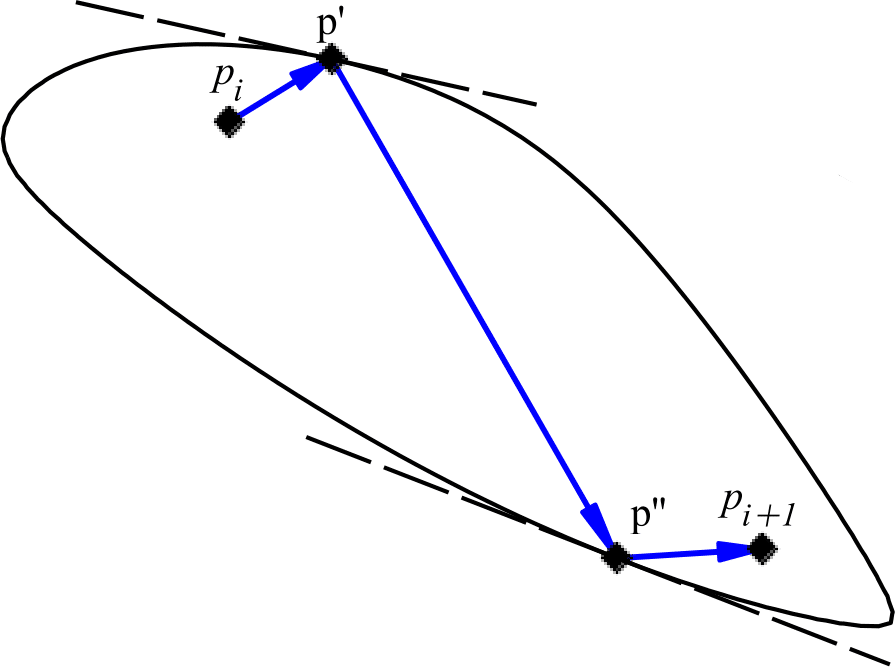}
		\includegraphics[scale=0.19]{./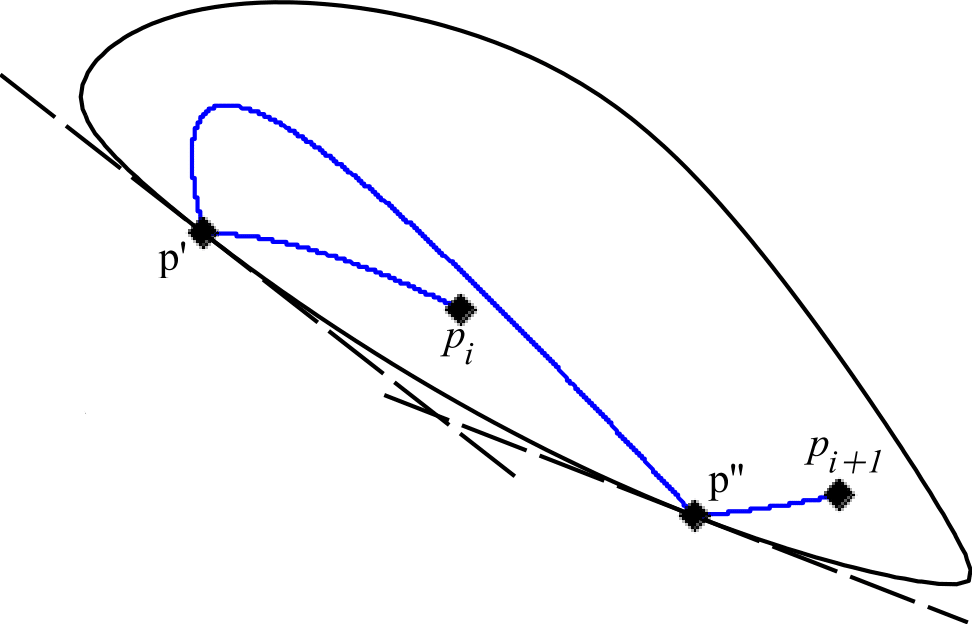}
		\caption{\emph{\textmd{The $i$-th step of the \billiard [Algorithm~\ref{alg:billiard}] (left) and of the \hmcr [Algorithm~\ref{alg:HMC}] (right) random walks.}}}
		\label{figure:examplebilliard}
	\end{center}
\end{figure}

Consider a spectrahedron $S$ in the plane (Figure~\ref{figure:examplebilliard}),
given by an  LMI $	\lmi{x} = \bm{A}_0 + x_1\bm{A}_1 + x_2\bm{A}_2$.
The  matrices $\bm{A}_i$, $0 \leq i \leq 2$, are in the appendix.

Starting from the point $\bm{p}_0 = (-1,1)^\top$, we walk along the line
$L$ with parameterization: $\Phi(t) = \bm{p}_0 + t\bm{u}$, where
$\bm{u} = (1.3,0.8)^\top$.
Then, \intersectionop finds the
intersection of $S$ with $L$, by solving the degree one PEP,
$(\bm{B}_0 + t\bm{B}_1)\bm{x}=0$, where $\bm{B}_0=\bm{F}(\bm{p}_0)$
and $\bm{B}_1 =u_1\bm{A}_1 + u_2\bm{A}_2$.
Acquiring $t_- = -0.8$ and $t_+ = 0.5$, we obtain the intersection point
$\bm{p_1}$, which corresponds to
$\bm{p}_0 + t_+ \bm{u} = (-0.3,1.4)^\top$.

To compute the direction of the trajectory, immediately after we
reflect on the boundary of $S$ at $\bm{p_1}$, \reflectionop computes
\begin{equation}
\bm{w} = \frac{\nabla\det\bm{F}(\Phi(t_+))}{\vert\nabla\det\bm{F}(\Phi(t_+))\vert} = (\bm{v}^\top\bm{A}_1\bm{v},\bm{v}^\top\bm{A}_2\bm{v} )^\top= (-0.2,-1)^\top ,
\end{equation}
where  $\bm{v}$ is the eigenvector of $(\bm{B}_0 + t\bm{B}_1)\bm{x}=0$,
with eigenvalue $t_+$. The reflected direction is
$\bm{u'} = \bm{u} - 2\langle \bm{u}, \bm{w}\rangle \,\bm{w}=(0.8,-1.3)^\top .$

\section{Random walks}
\label{sec:rand-walks}

\begin{table}[t]
	\centering
	\begin{tabular}{|c|c|}\hline
		& per-step Complexity \\ \hline\hline
		\hitandrun 	&  $\OO(m^{2.697} + m \lg{L} + nm^2)$ \\ \hline
		\cdhr 	&  $\OO(m^{2.697} + m \lg{L} + m^2)$ \\ \hline
		\billiard 	& $\sOO(\rho(m^{2.697} + m \lg{L} + nm^2))$ \\ \hline
		\hmcr 			& $\sOO(\rho((dm)^{2.697} + md \lg{L} + dnm^2))$ \\ \hline
	\end{tabular}
	\caption{\emph{\textmd{The per-step complexity of the random
				walks in
				Section~\ref{sec:rand-walks}.}}
		\label{tab:walks}}
\end{table}

Using the basic geometric operations of Section~\ref{sec:geom-ops}, we
implement and analyze three random walks for spectrahedra: Hit and Run
(\hitandrun), its variant Coordinate Directions Hit and Run (\cdhr), Billiard Walk (\billiard), and Hamiltonian Monte Carlo
with reflections (\hmcr).  In Table~\ref{tab:walks}, we present the
per-step arithmetic complexity for each random walk.

\subsection{Hit and Run}
\label{sec:hit-and-run}

\hitandrun (Algorithm~\ref{alg:hitandrun}) is a random walk that samples from any
probability distribution $\pi$ truncated to a convex body $K$; in our case a
spectrahedron $S$.
We should mention that there exist bounds for its mixing time only when
$\pi$ is log-concave distribution, for example the uniform distribution, which
is $\sOO(n^3)$.

At the $i$-th step, \hitandrun chooses uniformly at random a (direction of a)
line $\ell$, passing from its current position $\bm{p}_i$. Let
$\bm{p}_1$ and $\bm{p}_2$ be the intersection points of $\ell$ with $S$.
Let $\pi_\ell$ be the restriction of $\pi$ on the segment $\segment{p_2}{p_2}$.
Then, we choose $\bm{p}_{i+1}$ from $\segment{p_1}{p_2}$ w.r.t. the distribution
$\pi_\ell$.

\begin{algorithm}
	\caption{\textsc{Hit-and-Run\_Walk} \ \ \ (\hitandrun)}
	\label{alg:hitandrun}
	\SetKwRepeat{Do}{do}{while}
	\SetKwInOut{Input}{Input}
	\SetKwInOut{Output}{Output}
	\SetKwInOut{Require}{Require}
	
	\Input{LMI $\lmi{x} \succeq 0$ for a spectrahedron $S$
		\&~a~point~$\bm{p}_i$.}
	
	\Require { $\bm{p}_i \in S$}
	
	\Output{The point $\bm{p}_{i+1}$ of the $(i+1)$-th step of the walk. }

	{BO ($\bm{F}$, interior point $\bm{p_i}$)}{
	$\bm{v} \pickrandom \mathcal{U}(\boundary{\unitball})$;  // \textit{choose direction} \\
	$\Phi(t) := \bm{p}_i + t\bm{v}$; // \textit{define trajectory}\\
	
	${t_-}, {t_+}$ $\leftarrow$ \intersectionop($\bm{F}, \Phi(t)$)\;
	$\bm{p}_{i+1} \pickrandom [\bm{p}_i+{t_-}\bm{v},\ \bm{p}_i+{t_+}\bm{v}]$ w.r.t. $\pi_{\ell}$\;
	
	\KwRet $\bm{p}_{i+1}$\;	
	}
\end{algorithm}





\begin{lemma}
  \label{lem:HnR}
  The per-step complexity of \hitandrun is $\sOO(m^{2.697} + m \lg{1/\epsilon} + n m^{2})$.
\end{lemma}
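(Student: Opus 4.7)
The plan is to walk through Algorithm~\ref{alg:hitandrun} line by line and account for the arithmetic cost of each, then verify that the dominant contribution matches the stated bound. The four operations to cost out are: (i) drawing a uniform random direction $\bm{v}\in\boundary{\unitball}$, (ii) setting up the parametric trajectory $\Phi(t)=\bm{p}_i+t\bm{v}$, (iii) the call to \intersectionop, and (iv) drawing $\bm{p}_{i+1}$ from the restricted density $\pi_{\ell}$ on $\segment{p_i+t_-\bm{v}}{p_i+t_+\bm{v}}$.

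First I would dispatch the easy steps. Sampling a uniform direction on $\boundary{\unitball}$ reduces to drawing $n$ i.i.d.\ standard normals and normalizing, which costs $\OO(n)$ arithmetic operations (plus a square root, absorbed into $\sOO$). The parameterization $\Phi(t)=\bm{p}_i+t\bm{v}$ is already in the required form — this is a polynomial curve of degree $d=1$ — so there is nothing to build at this stage. For the final sampling step, on a one-dimensional segment the task reduces to inverting a univariate CDF; for the uniform or more generally log-concave $\pi_{\ell}$ (the regime in which the mixing time of \hitandrun is known), this can be carried out up to precision $\epsilon = 2^{-L}$ in $\sOO(\lg(1/\epsilon))$ operations, which is absorbed into the $m\lg(1/\epsilon)$ term.

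The dominant contribution comes from step (iii). I would apply Lemma~\ref{lem:intersection2} directly, specialized to degree $d=1$. That lemma gives $\sOO((md)^{2.697}+md\lg L+dnm^2)$ arithmetic operations for the whole intersection subroutine (including the construction of the PEP coefficient matrices $\bm{B}_0,\bm{B}_1$ from $\bm{F}$ and $\Phi$). Substituting $d=1$ collapses this to
\[
\sOO(m^{2.697}+m\lg L+nm^2),
\]
where the $nm^2$ summand records the construction of $\bm{B}_0=\bm{F}(\bm{p}_i)$ and $\bm{B}_1=\sum_{j=1}^n v_j \bm{A}_j$, each a linear combination of $n$ matrices of size $m\times m$.

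Summing the four contributions and taking $L=\lg(1/\epsilon)$ gives the claimed bound $\sOO(m^{2.697}+m\lg(1/\epsilon)+nm^2)$. I do not expect any serious obstacle: the argument is essentially bookkeeping on top of Lemma~\ref{lem:intersection2}. The only subtle point is making precise what "sample $\bm{p}_{i+1}$ w.r.t.\ $\pi_{\ell}$" costs; I would address this by restricting attention to the log-concave setting in which the mixing-time guarantees for \hitandrun hold, and noting that inverse-CDF sampling on a segment is poly-logarithmic in $1/\epsilon$ and thus absorbed by the $m\lg(1/\epsilon)$ term already present.
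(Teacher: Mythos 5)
Your proof is correct and follows essentially the same route as the paper: the cost is dominated by the call to \intersectionop, which by Lemma~\ref{lem:intersection2} with trajectory degree $d=1$ gives $\OO(nm^2)$ for constructing the \PEP and $\sOO(m^{2.697}+m\lg(1/\epsilon))$ for solving it. Your extra bookkeeping for the direction draw and the one-dimensional sampling step is a harmless refinement of the paper's observation that these are dominated by the intersection cost.
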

\begin{proof}
  The per-step complexity of \hitandrun is dominated by the
  \intersectionop, which requires $\OO(nm^2)$ operations for the
  construction of the \PEP and $\sOO(m^{2.697} + m \lg{1/\epsilon})$ for
  solving it; in the case where we want to approximate the intersection point up to
  a factor or $\epsilon = 2^{-L}$ (Lemma~\ref{lem:intersection2}).
\end{proof}

There is also a variation of \hitandrun, the \emph{coordinate directions Hit and
  Run} (\cdhr )~\cite{Smith84}. This walk chooses the direction vector randomly
and uniformly among the vector basis \{$\bm{e}_i,\ i \in [n]\}$.
In \cdhr, for every step aside the first, the construction of the \PEP takes
$\OO(m^2)$ operations, and the complexity does not depend on the dimension $n$.
The reason for this improvement is that to build the \PEP,
$\bm{F}(\bm{p}_i + t\bm{e}_j) = \bm{F}(\bm{p}_i) + t\bm{A}_j$,
we can obtain the  value of
$\bm{F}(\bm{p}_i)$ from
$\bm{F}(\bm{p}_i) = \bm{F}(\bm{p}_{i-1}) + \hat{t}\bm{A}_k$, assuming that at the
previous step we have chosed $\bm{e}_k$ as direction. There is no theoretical
mixing time for \cdhr.



\subsection{Billiard walk}
\billiard \cite{POLYAK20146123}, Algorithm~\ref{alg:billiard}, samples a convex body
$K$ under the uniform distribution; no theoretical results for its mixing time
exist. At $i$-th step, being at position $\bm{p}_i$, it chooses uniformly a
direction vector $\bm{v}$ and a number $\ell$, where $\ell=-\tau\ln\eta$ and
$\eta\sim U(0,1)$. Then, it moves at the direction of $\bm{v}$ for distance $\ell$.
If during the movement, it hits the boundary without having covered the required
distance $\ell$, then it continues on a reflected trajectory. If the number of
reflections exceeds a bound $\rho$, it stays at $\bm{p}_i$. In
\cite{POLYAK20146123} they experimentally conclude that \billiard mixes faster
when $\tau \approx diam(K)$, where $diam(K)$ is the diameter of $K$.

\begin{algorithm}[h]
	\caption{\textsc{Billiard\_Walk} \ \ \ (\billiard)}
	\label{alg:billiard}
	\SetKwRepeat{Do}{do}{while}
	\SetKwInOut{Input}{Input}
	\SetKwInOut{Output}{Output}
	\SetKwInOut{Require}{Require}
	
	\Input{An LMI $\lmi{x} \succeq 0$ for a spectrahedron $S$, a point
		$\bm{p_i}$, the diameter $\tau$ of $S$ and a bound $\rho$ on the
		number of reflections.}
	
	\Require { $\bm{p}_i \in S$}
	
	\Output{The point $\bm{p}_{i+1}$ of the $(i+1)$-th step of the walk. }

	$\ell \leftarrow -\tau\ln\eta$ ; $\eta \pickrandom \mathcal{U}((0,1))$; // \textit{choose length}\\
	$\bm{v} \pickrandom \mathcal{U}(\boundary{\unitball})$;  // \textit{choose direction} \\
	$\bm{p} \leftarrow \bm{p}_i$\;
	
	\Do{$\ell > 0$}{
		$\Phi(t) := \bm{p} + t\bm{v}$; // \textit{define trajectory}\\
		$t_+, \bm{s}_+$ $\leftarrow$ \reflectionop($\bm{F}, \Phi(t)$)\;
		$\hat{t} \leftarrow \min\{t_+, \ell\}$ ;
		$\bm{p} \leftarrow \Phi(\hat{t}) $ ;

		\lIf{$\overline{t} < \ell$} {
		$\bm{v} \leftarrow$ $\bm{s}_+$
		}
		$\ell \leftarrow \ell - \hat{t}$ ;
	}
	
	\lIf {$\#\{\text{reflections}\} > \rho$}{
		\KwRet $\bm{p_{i+1}}= \bm{p_{i}}$
	}
	{
		\KwRet $\bm{p}_{i+1} = \bm{p}$	
	}

\end{algorithm}

\begin{lemma}
  \label{lem:billard}
  The per-step complexity of \billiard is  $\sOO(\rho(m^{2.697} + m \lg{L} + nm^2))$,
  where $\rho$ is the number of reflections.
\end{lemma}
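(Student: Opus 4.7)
The plan is to bound the per-step cost of \billiard by counting the work performed inside its main do-while loop, which executes at most $\rho$ times (by the explicit reflection bound in Algorithm~\ref{alg:billiard}). Outside the loop the work is negligible: sampling a length $\ell = -\tau\ln\eta$ and a uniform direction $\bm{v}\in\boundary{\unitball}$, together with the final check on the reflection counter, are all $\OO(n)$ operations, so they will not influence the asymptotic bound.

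For each iteration of the loop, the dominant subroutine is the call to \reflectionop on the linear trajectory $\Phi(t) = \bm{p} + t\bm{v}$. This is a polynomial parametric curve of degree $d=1$, so I would invoke Lemma~\ref{lem:reflection} specialized to $d=1$: the cost becomes $\sOO(m^{2.697} + m\lg L + nm^2)$ arithmetic operations per reflection, where the $m^{2.697} + m\lg L$ term accounts for solving the induced (linear) PEP of size $m$ up to precision $2^{-L}$, and the $nm^2$ term accounts for assembling $\lmi{\Phi(t)}$ from the $n$ matrices $\bm{A}_i$ and for evaluating the gradient $\nabla\det\bm{F}(\Phi(t_+))$ via the formula $(\bm{v}^\top \bm{A}_i \bm{v})_{i\in[n]}$ from Lemma~\ref{lemma:gradient}. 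The remaining bookkeeping inside the loop --- updating $\bm{p} \leftarrow \Phi(\hat t)$, comparing $\hat t$ with $\ell$, and decrementing $\ell$ --- costs only $\OO(n)$ and is absorbed.

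Summing over the at most $\rho$ iterations of the loop yields the stated per-step complexity $\sOO(\rho(m^{2.697} + m\lg L + nm^2))$, completing the argument.

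There is no real obstacle here; the only subtle point is ensuring that the cost of constructing $\bm{F}(\bm{p})$ does not need to be paid afresh from scratch at every reflection. One can observe that after a reflection at $\bm{p}= \Phi(t_+)$ the new PEP pencil $\bm{B}_0 + t\bm{B}_1$ still has $\bm{B}_0 = \bm{F}(\bm{p})$, which is already available as a by-product of the previous \intersectionop call, and $\bm{B}_1 = \sum_i v_i \bm{A}_i$ must be rebuilt at cost $\OO(nm^2)$ because the direction $\bm{v}$ changes. Hence the $nm^2$ term cannot be improved within a single step, but it is also not multiplied by any extra factor, so the bound stands as claimed.
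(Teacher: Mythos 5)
Your proposal is correct and follows essentially the same route as the paper: bound the number of loop iterations by $\rho$ and charge each iteration the cost of \reflectionop specialized to a degree-one trajectory via Lemma~\ref{lem:reflection}, giving $\sOO(\rho(m^{2.697} + m \lg{L} + nm^2))$. The additional remark about reusing $\bm{F}(\bm{p})$ across reflections is a fine observation but not needed for the stated bound.
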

\begin{proof}
  The per-step complexity of \billiard is dominated by the \reflectionop, which
  requires $\sOO(m^{2.697} + m \lg{L} + nm^2)$ arithmetic operations
  (Lemma~\ref{lem:reflection}), when we want to approximate the intersection
  point up to a factor of $2^{-L}$. Since we allow at most $\rho$ reflections
  per step, the total complexity becomes
  $\sOO(\rho(m^{2.697} + m \lg{L} + nm^2))$.
\end{proof}

\subsection{Hamiltonian Monte Carlo with Reflections}
\label{sec:hmcr}

\begin{algorithm}
  \caption{\textsc{HMC\_w\_reflection} \ \ \ (\hmcr)}
  \label{alg:HMC}
  \SetKwRepeat{Do}{do}{while}
  \SetKwInOut{Input}{Input}
  \SetKwInOut{Output}{Output}
  \SetKwInOut{Require}{Require}

  \Input{An LMI $\lmi{x} \succeq 0$ representing a spectrahedron $S$,
    a point $\bm{p}_i$, the diameter $\tau$ of $S$ and a bound $\rho$ to the number of reflections.}

  \Require { $\bm{p}_i \in S$}

  \Output{The point $\bm{p}_{i+1}$ of the $(i+1)$-th step of the walk. }

  $\ell \leftarrow \tau\eta$;  $\ \eta \pickrandom \mathcal{U}((0,1))$; // \textit{choose length}\\
  $\bm{v} \pickrandom \mathcal{N}(0,I_n)$; // \textit{choose direction}\\

  \Do{$\ell > 0$}{
    Compute trajectory $\Phi(t)$ from ODE (\ref{eq:ode_ham})\;
    ${t_+}, \bm{s}_+$ $\leftarrow$ \reflectionop($\bm{F}, \Phi(t)$)\;
    $\hat{t} \leftarrow \min\{{t_+}, \ell\}$ ;
    $\bm{p} \leftarrow \Phi(\hat{t})$ ;
    $\bm{v} \leftarrow$ $\bm{{s}}$ ;
    $\ell \leftarrow \ell - \hat{t}$ ;
  }

  \lIf {$\#\ \{\text{reflections} \}> \rho$}{
    \KwRet $\bm{p}_{i+1}= \bm{p}_{i}$
  }

  \KwRet $\bm{p}_{i+1} = \bm{p}$	\;

\end{algorithm}

Hamiltonian Monte Carlo (HMC), can be used to sample from any probability
distribution $\pi$. Our focus lies again on the log-concave distributions, that
is $\pi(\bm{x})\propto e^{-\alpha f(\bm{x})}$. We exploit the setting in~\cite{Lee18} as they approximate the Hamiltonian trajectory with a polynomial curve. In this setting, if we assume
that $f$ is a strongly convex function, then the mixing time of HMC is
$\OO(k^{1.5}\log(n/\epsilon))$, where $\kappa$ is the condition number of
$\nabla^2f$~\cite{Lee18}. If we truncate $\pi$ by considering its restriction in
a convex body, then we can use boundary reflections (\hmcr), as in Algorithm~\ref{alg:HMC}, to ensure that the
random walk converges to the target distribution \cite{Chevallier18}; however,
in this case the mixing time is unclear.

The Hamiltonian dynamics behind HMC operate on a $n$-dimensional
position vector $\bm{p}$ and a $n$-dimensional momentum
$\bm{v}$. So the full state space has $2n$ dimensions. The
system is described by a function of $\bm{p}$ and $\bm{v}$ known as
the Hamiltonian, 
\[ H(\bm{p}, \bm{v}) = U(\bm{p}) + K(\bm{v})
  = f(\bm{p}) + \frac{1}{2}\vert\bm{v}\vert^2 .
\]
To sample from $\pi$, one has to solve the
following system of Ordinary Differential Equations (ODE):
{
	\begin{equation}\label{eq:ode_ham}
	\begin{split}
	\frac{d\bm{p}}{dt} = \frac{\partial H(\bm{p},\bm{v})}{\partial \bm{v}}\\
	\frac{d\bm{v}}{dt} = -\frac{\partial H(\bm{p},\bm{v})}{\partial \bm{p}} 
	\end{split}
	\quad\Rightarrow\quad
	\left\{
	\begin{array}{lll}	
	\frac{d\bm{p}(t)}{dt} = \bm{v}(t)\\
	\quad\\
	\frac{d\bm{v}(t)}{dt} = -\alpha\nabla f(\bm{p})\\
	\end{array} 
	\right. .
	\end{equation}
}
If $\pi(\bm{x})$ is a log-concave density, then we can approximate the
solution of the ODE with a low degree polynomial trajectory
\cite{Lee18}, using the collocation method. A degree
$d=\OO(1/\log(\epsilon) )$ suffices to obtain a polynomial trajectory
with error $\OO(\epsilon)$, for a fixed time interval, while we 
perform just $\sOO(1)$ evaluations of $\nabla f(\bm{x})$.

HMC at the $i$-th step uniformly samples a step $\ell$ from a proper
interval to move on the trajectory implied by ODE (\ref{eq:ode_ham}),
choses $\bm{v}$ randomly from $\mathcal{N}(0, I_n)$, and updates
$\bm{p}$ using the ODE in (\ref{eq:ode_ham}), for $t\in[0, \ell]$.
When $\pi$ is truncated in a convex body, then \hmcr fixes an upper bound $\rho$
on the number of reflections and reflects a polynomial trajectory as
we describe in Section~\ref{sec:reflections}.

\begin{figure}[t]
	\centering
	\includegraphics[width=0.48\textwidth]{./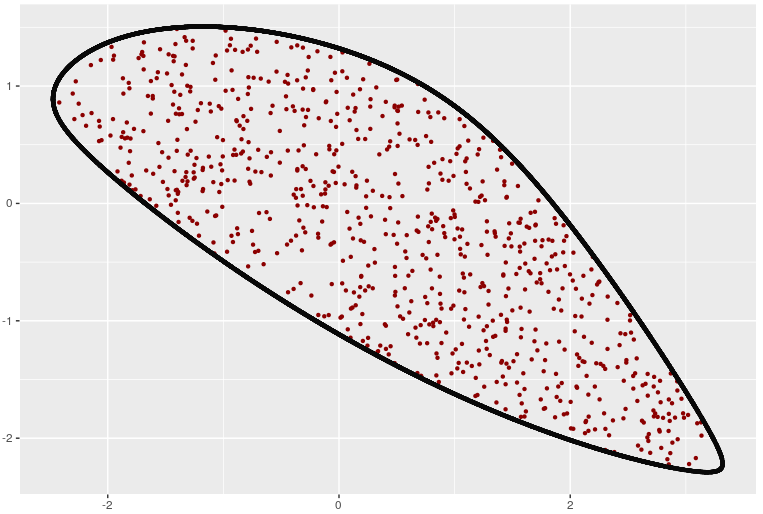}
	\hspace{0.2cm}
	\includegraphics[width=0.48\textwidth]{./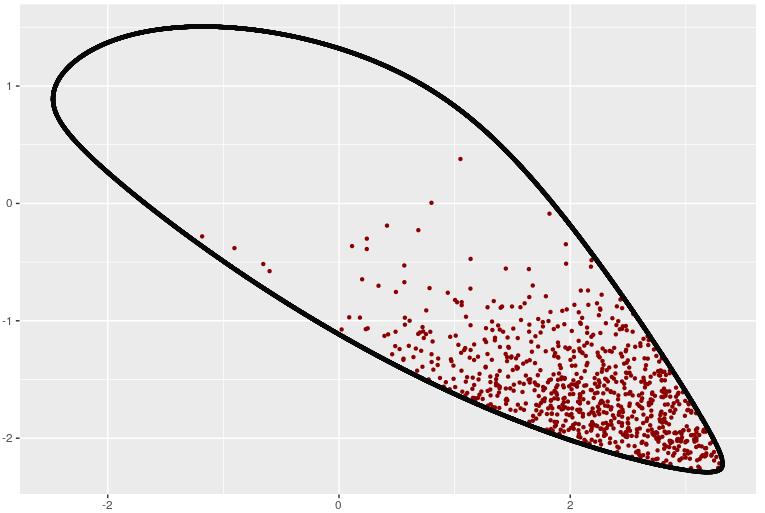}
	\caption{\emph{\textmd{Samples from the uniform distribution with \billiard\ (left) and from the Boltzmann distribution $\pi(x)\propto e^{-cx/T}$, where $T=1,\ c=[-0.09, 1]^T$, with \hmcr\ (right). The volume of this spectrahedron is $10.23$.}}\label{fig:sampling}}
\end{figure}

Each step of \hmcr, when $\pi(x)$ is a log-concave density truncated
by $S$, costs $\sOO(\rho((dm)^{2.697} + md \lg{L} + dnm^2))$,
if we approximate the intersection points up to a factor $2^{-L}$,
where $d$ is the degree of the polynomial that approximates the
solution of the~ODE~(\ref{eq:ode_ham}).

\begin{lemma}
  \label{lem:hmc}
  The per-step complexity of \hmcr  is
  $\sOO(\rho((dm)^{2.697} + md \lg{L} + dnm^2))$,
  where $\rho$ is the number of reflections.
\end{lemma}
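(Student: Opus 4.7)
The plan is to mirror the argument used for \billiard\ in Lemma~\ref{lem:billard}, but with the parametric trajectory now being a polynomial of degree $d$ rather than a line. The core of the bound will follow by invoking Lemma~\ref{lem:reflection} once per reflection and multiplying by the cap $\rho$ on the number of reflections permitted per HMC step.

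First, I would inspect Algorithm~\ref{alg:HMC} and break each step into two phases: (i) constructing the polynomial trajectory $\Phi(t)$ that approximates the solution of the Hamiltonian ODE~(\ref{eq:ode_ham}) on the current segment, and (ii) calling $\reflectionop(\bm{F},\Phi(t))$ to locate the first boundary hit and compute the reflected velocity. For phase (i), I would appeal to the collocation method of~\cite{Lee18}, already cited in the text, which produces a trajectory of degree $d = \OO(\log(1/\epsilon))$ using only $\sOO(1)$ evaluations of $\nabla f$; this cost is subsumed by the per-step linear-algebra work and disappears inside $\sOO(\cdot)$.

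Second, for phase (ii) I would invoke Lemma~\ref{lem:reflection} directly. With an LMI involving $n$ matrices of size $m\times m$ and a parametric curve of degree $d$, each call to \reflectionop costs $\sOO((md)^{2.697} + md\lg L + dnm^2)$ arithmetic operations to resolve the intersection up to precision $\epsilon = 2^{-L}$. Since Algorithm~\ref{alg:HMC} executes at most $\rho$ reflections before either terminating on the boundary or resetting $\bm{p}_{i+1}\leftarrow\bm{p}_i$, summing over the iterations of the \texttt{do}-loop yields the claimed per-step bound of $\sOO(\rho((dm)^{2.697} + md\lg L + dnm^2))$.

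There is no serious obstacle; the only subtlety is arguing that the trajectory construction and the ODE-related overhead (gradient evaluations of $f$, solving the small collocation linear system in the degree-$d$ coefficients) are dominated by the PEP cost inside \reflectionop. For standard log-concave $f$ of interest here, a single gradient evaluation is $\OO(n)$ and the collocation linear system is of size $\OO(d)$, both of which are absorbed by the $(dm)^{2.697}$ and $dnm^2$ terms already present. After this bookkeeping, the proof reduces to a one-line application of Lemma~\ref{lem:reflection} exactly as in the \billiard\ case.
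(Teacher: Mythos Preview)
Your proposal is correct and matches the paper's approach: the paper does not give a standalone proof of Lemma~\ref{lem:hmc} but justifies it in the paragraph immediately preceding the statement, observing that each step costs at most $\rho$ calls to \reflectionop\ on a degree-$d$ curve and citing the bound of Lemma~\ref{lem:reflection}. Your argument is exactly this, with the additional (and reasonable) remark that the collocation overhead from~\cite{Lee18} is absorbed by the dominant terms.
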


\section{Applications and experiments}
\label{sec:implementation}

This section demonstrates and compares the algorithms of
Section~\ref{sec:rand-walks} and the efficiency of our software on three
applications that rely on sampling from spectrahedra. 

We call \emph{walk length} the number of the intermediate
points that a random walk visits before producing a single sample. 
The longer the walk length of a random walk is, the smaller the distance of the current distribution to the stationary (target) distribution becomes. 
Typically we choose a sufficiently large length for the first sample, this procedure is often called "burning".

Our code is parameterized by the
floating point precision of the computations. 
We use {\tt Eigen}~\cite{eigenlib} for basic linear algebra operations, such as
Cholesky decomposition and matrix multiplication. For eigenvalue computations,
we employ {\tt Spectra}~\cite{Spectra}, which is based on {\tt Eigen} and offers
crucial optimizations. First, it solves generalized eigenvalue problems
 of special structure;
that is $(B_0 - \lambda B_1)v =0$, when $B_0$ is positive
semidefinite and $B_1$ symmetric. This operation is encountered when \billiard\ or
\hitandrun\ call \intersectionop. Second, it offers directly the computation
of the largest eigenvalue which corresponds to $t_+$ after a simple
transformation. Finally, {\tt Spectra} provides $\sim 20$x speedup over
the default eigenvalue computation by {\tt Eigen}.
To the best of our knoledge, our software is the first that can sample
efficiently from spectrahedra and estimates volumes up to a few hundred
dimensions. It is accessible on github.%
\footnote{\scriptsize\url{https://github.com/GeomScale/volume_approximation/tree/sample_spectrahedra}}

For our experiments, we generate random spectrahedra following
\cite{Dabbene2010ARC}. In particular, to construct the LMI of Equation~(\ref{eq:lmi}) we set $A_0$ to be positive semidefinite, i.e., $A_0 = ZZ^T + I_m$, where we pick the elements of $Z\in\mathbb{R}^{m\times m}$ uniformly at random from $[0,1]$. Then, for $A_i,\ i=1,\dots ,n$ we set,
\[
A_i =
\left[
\begin{array}{cc}
\tilde{Q} & 0 \\
0 & -\tilde{Q}
\end{array}
\right] ,\ \tilde{Q} = Q + Q^T ,
\]
where we pick the elements of $Q\in\mathbb{R}^{(m/2) \times (m/2)}$ uniformly at random from $[-1,1]$.
We performed all the experiments on PC with
{\tt Intel Core i7-6700 3.40GHz $\times$ 8 CPU} and {\tt 32GB RAM}.

\subsection{Volume computation}
\label{sec:volume}

We use the geometric operations (Section~\ref{sec:geom-ops})
and the random walks (Section~\ref{sec:rand-walks})
to compute the volume $\vol(S)$ of spectrahedron $S$.
Our implementation approximates $\vol(S)$ within relative error
$0.1$ with high probability in a few minutes, for dimension $n=100$.

A typical randomized algorithm for volume approximation exploits a
Multiphase Monte Carlo (MMC) technique, which reduces volume
approximation of convex body $S$ to computing a telescoping product of
ratios of integrals over $S$. In particular, for any sequence of functions $\{f_0,\ \dots ,f_k\}$, where $f_i:\R^n\rightarrow \R$,
we have:

	\begin{equation}
	\label{telegeneric}
	\vol(S) = \int_S 1dx = \int_S f_k(x)dx\frac{\int_S f_{k-1}(x)dx}{\int_S f_k(x)dx}\cdots\frac{\int_S 1dx}{\int_S f_0(x)dx} .
	\end{equation}
Notice that $\frac{\int_P f_{i-1}(x)dx}{\int_P f_i(x)dx} = \int_P\frac{f_{i-1}(x)}{f_i(x)}\frac{f_i(x)}{\int_P f_i(x)dx}dx$.
To estimate each ratio of integrals, we sample $N$ points from a
distribution proportional to $f_i$ and, we use the unbiased estimator
$\frac{1}{N}\sum_{j=1}^N\frac{f_{i-1}(x_j)}{f_i(x_j)}$. To exploit
Equation~(\ref{telegeneric}) we have to (i) fix the sequence such that $k$
is as small as possible, (ii) select $f_i$'s such
that we can compute efficiently each integral ratio, and (iii) compute
$\int_P f_k(x)dx$. The best theoretical result of \cite{Cousins15} fixes a sequence of
spherical Gaussians $\{f_0,\ \dots ,f_k\}$ with the mode being in $S$,
parameterized by the variance.
The overall complexity is $\sOO(n^3)$ \membershipop calls. The implementation in
\cite{Cousins16} is based on this algorithm but handles only convex
polytopes in H-representation as it requires the facets of the polytope and an inscribed
ball to fix the sequence of
Gaussians. Both the radius of the inscribed ball and the number of facets
strongly influence the performance of the algorithm. So, it cannot
handle efficiently the case of convex bodies without a facet
description, e.g., zonotopes \cite{Cousins16}, as it results  a
big sequence of ratios that spoil practical~efficiency.

\begin{table}[t]
	\centering
	\begin{tabular}{|c|ccc|}\hline
		$S$-$n$-$m$ & $\mu\pm t_{\alpha,\nu-1}\frac{s}{\sqrt{\nu}}$ &  $Points$ & $Time\ (sec)$ \\ \hline\hline
		$S$-$40$-$40$ & $(1.34\pm 0.12)$e-06 & 9975.2 & 6.7 \\ \hline
		$S$-$60$-$60$ & $(1.23\pm 0.11)$e-20 &  20370.9 & 28.5 \\ \hline
		$S$-$80$-$80$ & $(4.24\pm 0.26)$e-33 &  31539.1 & 124.4 \\ \hline
		$S$-$100$-$100$ & $(1.21\pm 0.10)$e-51  &  52962.7 & 362.3 \\ \hline
	\end{tabular}
	\caption{\emph{\textmd{(Volume of spectrahedra) For each $S$-$n$-$m$ we run $\nu=10$ experiments; $m$ is the matrix dimension in LMI and $n$ the ambient dimension. 
$\mu$ stands for the average volume, $s$ for the standard deviation; we give a confidence interval with level of confidence $\alpha = 0.05$; $t_{\alpha,\nu-1}$ is the critical value of student's distribution with $\nu-1$ degrees of freedom. $Points$ denotes the average number of points generated and $Time$ the
				average runtime in seconds.
For all the above we set the error parameter $e=0.1$.}} \label{tab:volumes}}
\end{table}

Our approach is to consider the $f_i$'s as a sequence of indicator functions of concentric balls centered in $S$, as in \cite{Emiris18}. In particular, let $f_k$ and $f_0$ be the indicator functions of $r\unitball$ and $R\unitball$ respectively, while $r\unitball\subseteq S \subseteq R\unitball$ and $S_i=(2^{(k-i)/n}r\unitball)\cap S$ for $i=0,\dots ,k$.
Thus, it suffices to
compute $\vol(r\unitball)$ and apply the following:
\begin{equation}
\label{eq:mmc_vol}
\vol(S) = \vol(S_k) \frac{\vol(S_{k-1})}{\vol(S_k)}
\cdots
\frac{\vol(S_0)}{\vol(S_{1})} ,
\quad  k=\lceil n\lg (R/r)\rceil .
\end{equation}
Furthermore, we employ the annealing schedule from \cite{Chalkis19}
to minimize $k$, without computing neither an
enclosed ball $r\unitball$ nor an enclosing ball $R\unitball$ of $S$.
We do so by probabilistically bounding each ratio of Equation~(\ref{eq:mmc_vol}) in an interval $[r, r+\delta]$, which is  given as
input.
To approximate each ratio of volumes, we sample uniformly distributed points from $S_i$ and count points in $S_{i-1}$. We follow the experimental results of Section~\ref{sec:control} and use \billiard which mixes faster than \hitandrun. 

Table~\ref{tab:volumes} reports the average volume, runtime, number of
points generated for each $S$-$n$-$m$ over $10$ trials. We also compute a $95\%$ confidence interval for the volume. Notice that for all cases the extreme values of each interval imply an error $\leq 0.1$, which was the requested error.
For $n=40$ just a few seconds suffice to approximate the volume and for $n=100$ our implementation takes a few minutes. 

\subsection{Expected value of a function}
\label{sec:control}

Randomized algorithms are commonly used for problems in robust control
analysis to overcome the (worst case) hardness, especially in
probabilistic
robustness~\cite{Blondel00, Khargonekar96,TCD-rc-bk-12,CalCam-rc-cdc-03}. A central problem is to approximate
the integral of a function over a spectrahedron,
e.g.~\cite{Calafiore-cdc-04, Ray93} and thus uniform sampling is of
particular interest. To put our experiments into perspective, we
present experiment up to $n=200$, while in \cite{Calafiore-cdc-04} and
\cite{CalCam-rc-cdc-03} they use only \hitandrun\ for experiments in
$n\leq 10$.

Our goal is to compute the expected value
of a function $f:\R^n\rightarrow [0,1]$, with respect to the measure
given by the uniform distribution $\pi$ over $S$,
i.e.,~$I = \int_Sf(x)\pi(x)dx$.
A standard approach
is the Monte Carlo method, which suggests to
sample $N$ independent samples from $\pi$. Then, 
\[\hat{\mathbb{E}}_N[f] = \frac{1}{N}\sum_{i=1}^Nf(x_i)\] is
an unbiased estimator for $I$.
We employ the random walks of Section~\ref{sec:rand-walks} 
to sample uniformly distributed points from $S$ (i.e.,~\hitandrun, \cdhr, and \billiard)
and we experimentally compare their efficiency.
It turns out that \billiard\ mixes much faster and results to better accuracy (see Figures~\ref{fig:control1} \&~\ref{fig:control2}). This observation agrees with the experiments on the rate of convergence for \hitandrun\ and \billiard\ in \cite{POLYAK20146123}. To come to a decisive conclusion we need to perform a more detailed practical study on the mixing time of these random walks; we leave this study as future work.
\begin{figure}[t]
	\centering
	\includegraphics[width=0.9\textwidth]{./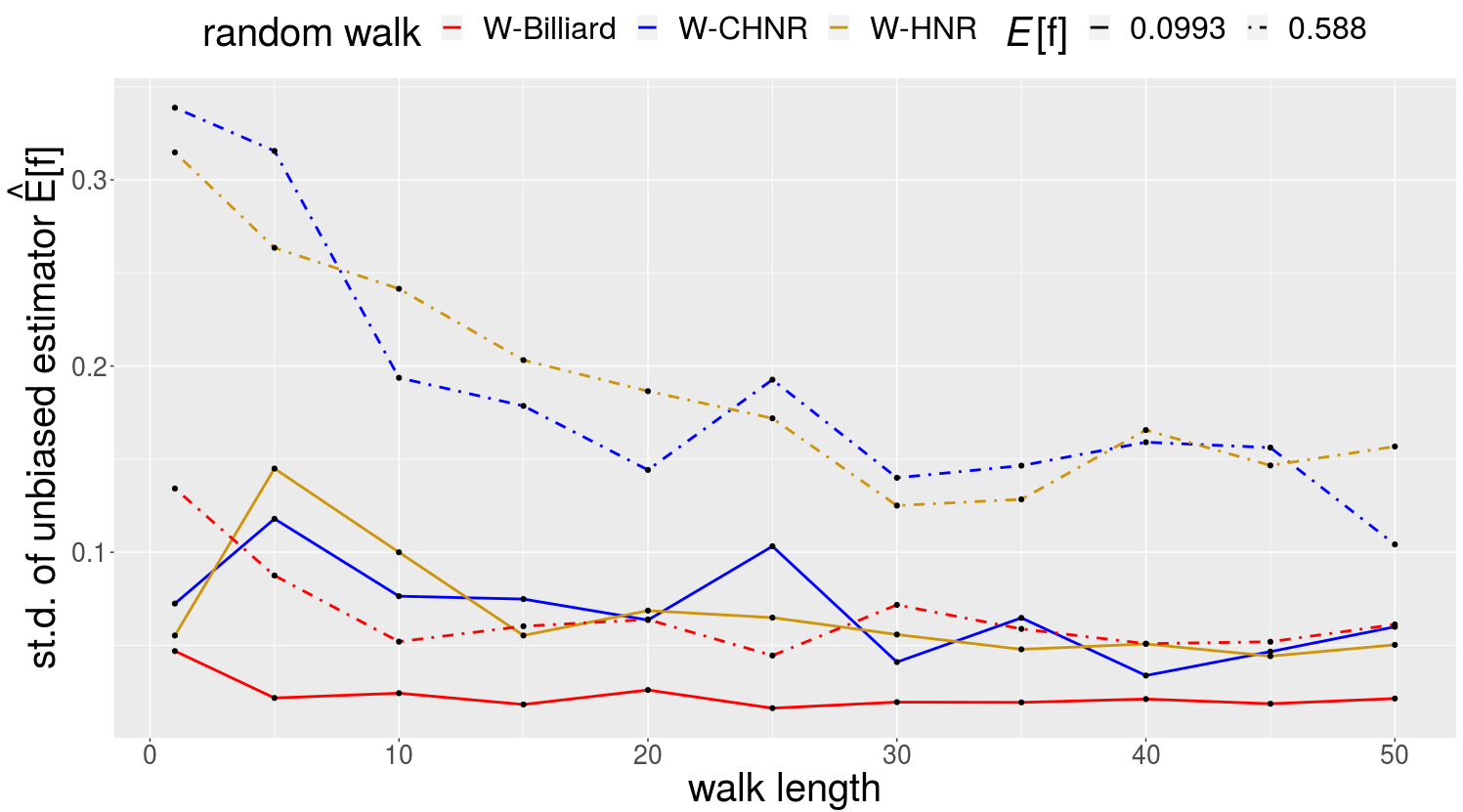}
	\caption{\emph{\textmd{The standard deviation of  $\hat{\mathbb{E}}_N[f]$ over $M=20$ trials, estimating 2 functions with $\mathbb{E}[f_1]=0.0993$ and $\mathbb{E}[f_2]=0.588$. For each walk length we sample $N=200$ points and we~repeat~$M=20$~times.}}\label{fig:control1}}
\end{figure}

The variance of an estimator is a crucial as it bounds the
approximation error. Using Chebyshev's inequality and \cite{Lovasz97},
we have
\begin{equation}\label{eq:estimator_error}
Prob[|\hat{\mathbb{E}}_N[f] - \mathbb{E}[f]|\leq \epsilon] \leq \frac{var(\hat{\mathbb{E}}_N[f])}{\epsilon^2}\leq\frac{4M_{\epsilon}}{N\epsilon^2},
\end{equation}
where $M_{\epsilon}$ is the mixing time of the random walk one uses to
sample ``$\epsilon$ close" to the uniform distribution from $S$. Thus,
for fixed $N$ and $\epsilon$, the smaller the mixing time of the
random walk is, the smaller the variance of estimator $\hat{\mathbb{E}}_N[f]$ and hence, the better the approximation.
We estimate $I$ when $f:=1(S\cap H)$, where $H$ is the union of two
half-spaces $H:=\{ x\ |\ cx\leq b_1\ \text{ or }\ cx\geq b_2\}$, where $b_2>b_1$ and
$1(\cdot )$ is the indicator function. Note that
$I = \vol(S\cap H)/\vol(S)$. To estimate it we sample approximate uniformly distributed points from $S$ and we count the number of points that lie in $S\cap H$.


\begin{figure}[t]
	\centering
	\includegraphics[width=0.9\textwidth]{./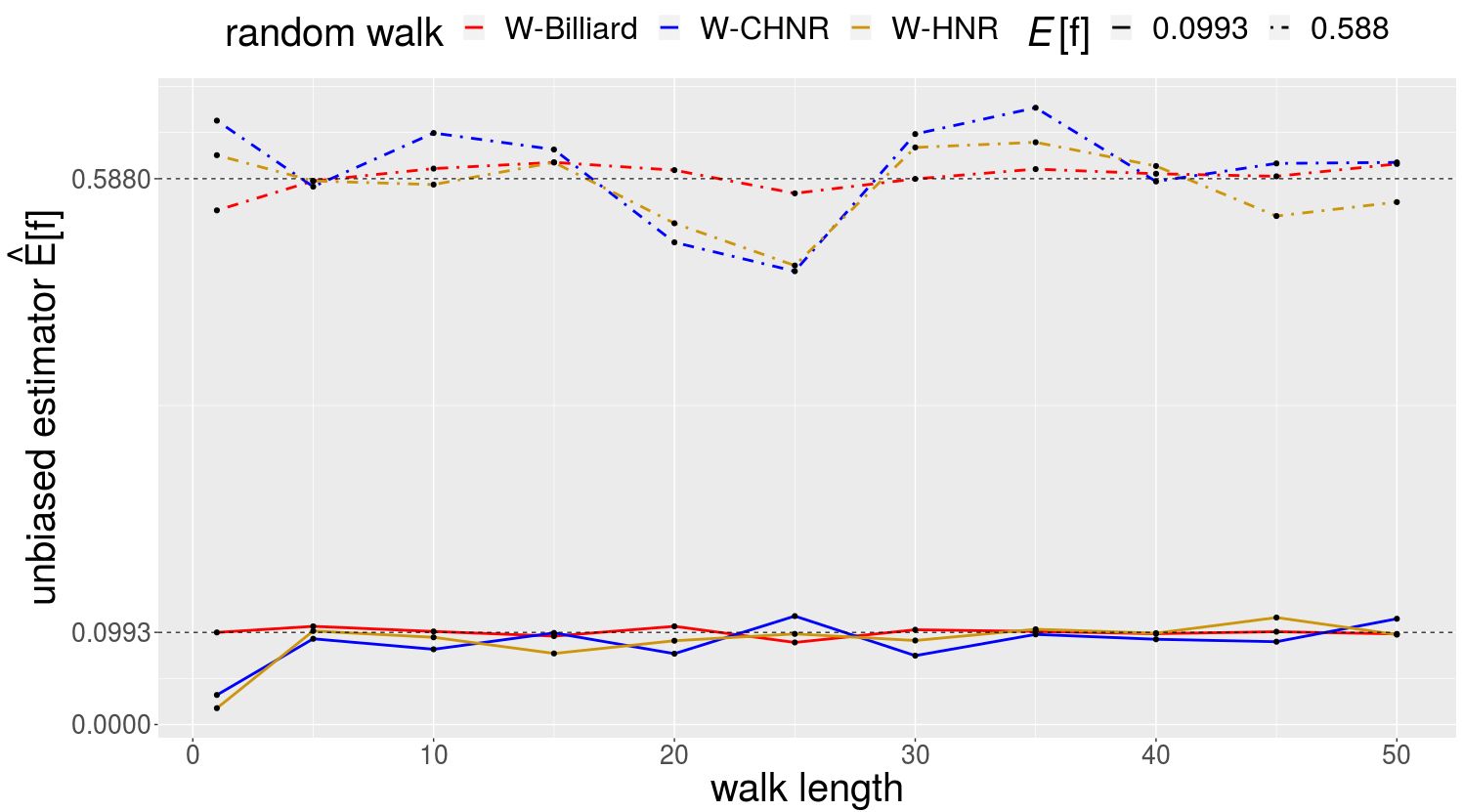}
	\caption{\emph{\textmd{The mean value of the estimator $\hat{\mathbb{E}}_N[f]$ over
				$M=20$ trials, estimating two functions with
				$\mathbb{E}[f_1]=0.0993$ and $\mathbb{E}[f_2]=0.588$. For each walk
				length we sample $N=200$ points and we repeat $M=20$
				times.}}\label{fig:control2}}
\end{figure}

\begin{table}[h]
	\centering
	\begin{tabular}{|c|ccccccc|}\hline
		walk length & $1$ & $5$ & $10$ & $20$ & $30$ & $40$ & $50$ \\ \hline\hline
		$S$-$100$-$100$ & 1.4 & 3.2 &  7.7  & 9.5  & 16.1 & 21.4 & 28.2 \\ \hline
		$S$-$200$-$200$ & 16.2 & 75.7 &  148 & 303 & 443 & 584 & 722 \\ \hline
	\end{tabular}
	\caption{\emph{\textmd{Average time in sec to sample $200$ points with \billiard\ from $10$ random spectrahedra $S$-$n$-$m$; $n$ for the dimension that $S$-$n$-$m$ lies; $m$ for the dimension of the matrix in LMI.}}\label{tab:bill_times}}
\end{table}

We estimate two functions $f_1,\ f_2$ with $\mathbb{E}[f_1]=0.0993$ and
$\mathbb{E}[f_2]=0.5880$ in dimension $n=50$ and for various walk lengths. For
each walk length we sample $N=200$ points and we repeat $M=20$
times. Then, for each $N$-set we compute
$\frac{1}{N}\sum_{i=1}^Nf(x_i)$ and we take the average and the
standard deviation (st.d.) over $M$. Figures~\ref{fig:control1}, \&~\ref{fig:control2} illustrates these values,
while the walk length increases. Notice that the st.d.\ is much
smaller and the approximation more stable when \billiard is used
compared to both \hitandrun and \cdhr.  As \billiard mixes faster, we
report in Table~\ref{tab:bill_times} the average time our software
needs to sample $N=200$ points for various walk lengths for
\billiard\ in $n=100,\ 200$. The average time to generate a point is
$\approx 0.3$ and $\approx 7.2$ milliseconds respectively.

\subsection{Sampling from non-uniform distributions}
\label{sec:optimization}

The random walks of Section~\ref{sec:rand-walks} open a promising
avenue for approximating the optimal solution of a semidefinite
program, that is
\begin{equation}\label{eq:sdp}
\min \langle \bm{c}, \bm{x}\rangle ,\ \text{subject to }\bm{x}\in S.
\end{equation}
We parameterize the optimization algorithm in~\cite{Kalai06} with the
choice of random walk and demonstrate that its efficiency relies
heavily on the sampling method. We perform experiments with \hmcr and
\hitandrun, as both can sample from the distribution the algorithm
requires.  Deterministic approximations to the optimal solutions of
these tests, were acquired via the {\tt SDPA} library~\cite{sdpa}.

\begin{figure}[t]
	\centering
	\includegraphics[width=\textwidth]{./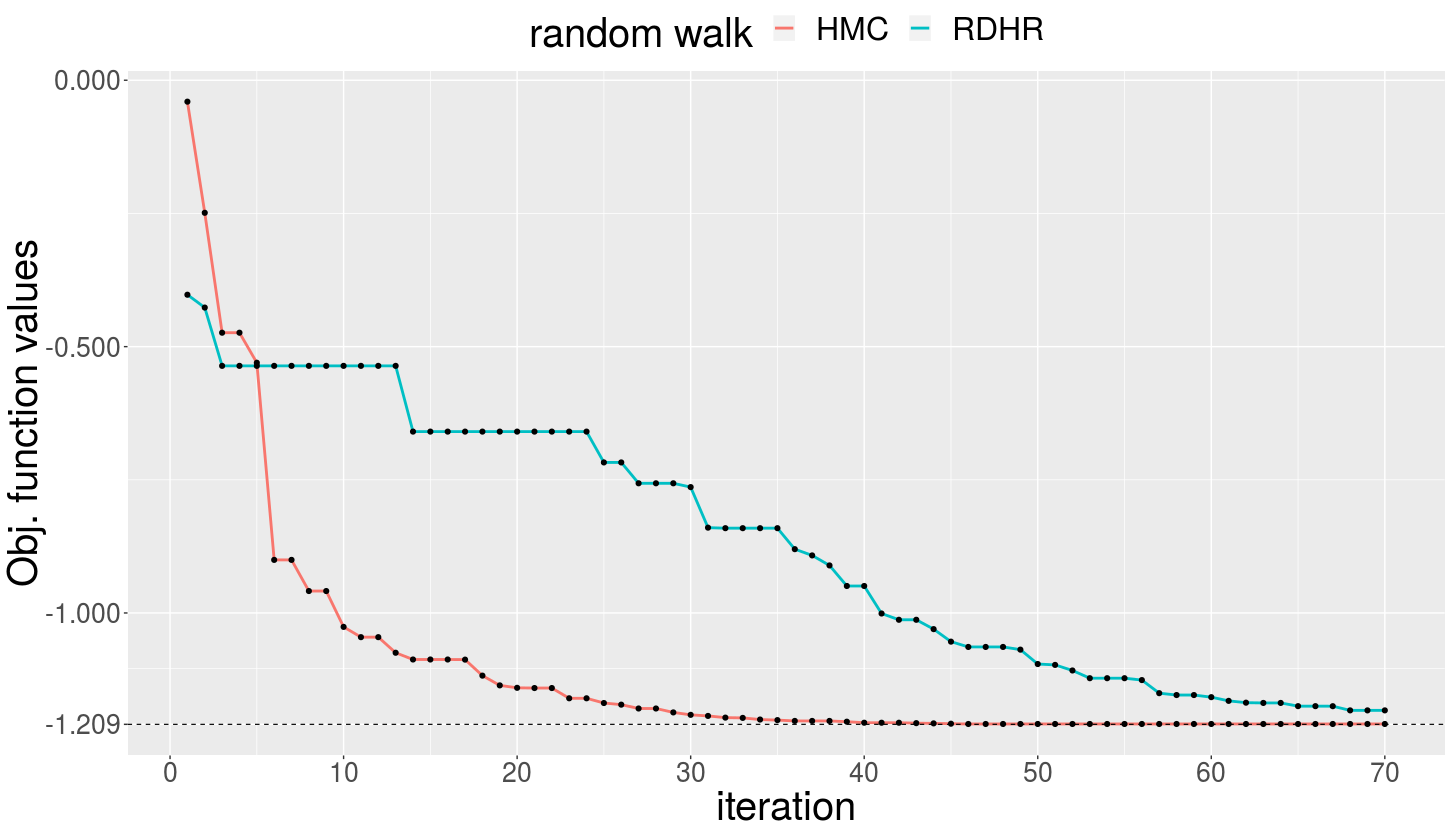}
	\caption{\emph{\textmd{Sample a point from $\pi(x)\propto e^{-cx/T_i}$ and update the objective current best in each iteration, with $T_0 \approx diam(S)$ and $T_i = T_{i-1}(1-1/\sqrt{n}),\ i=1,\dots 70$. The walk length equals to one for \hmcr and $500+4n^2=10\,500$ for \hitandrun.}}\label{fig:hmc_hnr}}
\end{figure}

The strategy to approximate the optimal solution $\bm{x}^*$ of
Equation~(\ref{eq:sdp}), is based on sampling from the Boltzmann
distribution, i.e.,~$\pi(\bm{x})\propto e^{-\bm{c} \bm{x}/T}$,
truncated to $S$.  The scalar $T$, is called \emph{temperature}.  As
the temperature $T$ diminishes, the mass of $\pi$ tends to concentrate
around its mode, which is $\bm{x}^*$. Thus, one could obtain a uniform
point using the algorithm in \cite{Lovasz03}, and then use it as a
starting point to sample from $\pi_0\propto e^{-cx/T_0}$, where
$T_0=R$ and $S\subseteq R\unitball$.  Then, the cooling schedule
$T_{i+1} = T_i(1-1/\sqrt{n})$ guarantees that a sample from $\pi_i$
yields a good starting point for $\pi_{i+1}$. After $\sOO(\sqrt{n})$
steps the temperature will be low enough, to sample a point within
distance $\epsilon$ from $\bm{x}^*$ with high probability.

\noindent
In  \cite{Kalai06}, they use only \hitandrun. We also employ  \hmcr.
To sample from Boltzmann distributions with \hmcr, at each step, starting from  $\bm{p_i}$ and with momenta $\bm{v_i}$,
the ODE of Equation~(\ref{eq:ode_ham}) becomes
\begin{equation}\label{eq:boltz_ode}
\frac{d^2}{dt^2}\bm{p}(t) = -\frac{\bm{c}}{T},\ \frac{d}{dt}\bm{p}(0) = \bm{v_i},\ \bm{p}(0) = \bm{p_i} .
\end{equation}
Its solution is the polynomial
$\bm{p}(t) = -\frac{\bm{c}}{2T}t^2 + \bm{v_i}t + \bm{p_i}$, which is a
parametric representation of a polynomial curve, see
Equation~(\ref{eq:param-curve}).

\begin{table}[t]
	\centering
	\begin{tabular}{|c|ccc|}\hline
		$S$-$n$-$m$ & \hmcr & \hitandrun\ $W_1$ & \hitandrun\ $W_2$ \\ \hline\hline
		$S$-$30$-$30$ &  $20.1$ / $2.9$/ $0$  & $184.3$ / $3.4$ / $1$  & $52.1$ / $5.2$ / $0$  \\ \hline
		$S$-$40$-$40$ &  $24.6$ / $7.9$ / $0$  & $223.3$ / $9.9$ / $2$  &  $61.9$ / $17.1$ / $0$    \\ \hline
		$S$-$50$-$50$ &  $29.2$ / $12.7$ / $0$  & $251.2$ / $22.3$ / $3$  &  $72.3$ / $44.6$ / $0$   \\ \hline
		$S$-$60$-$60$ &  $32.8$ / $24.32$ / $0$  & $272.7$ / $41.1$ / $3$  &  $81.5$ / $98.9$ / $0$  \\ \hline
	\end{tabular}
	\caption{\emph{\textmd{The average $\#$iteration / runtime / failures over $10$ generated $S$-$n$-$m$,  to achieve relative error $\epsilon\leq 0.05$. The walk length is one for \hmcr\ and $W_1=4\sqrt{n}$ and $W_2=4n$ for \hitandrun. With "failures" we count the number of times the method fails to converge. Also $m$ is the dimension of the matrix in LMI and $n$ is the dimension that $S$-$n$-$m$ lies.}} \label{tab:hmc_hnr}}
\end{table}

In Table~\ref{tab:hmc_hnr} we follow the cooling schedule 
described, after setting $T_0\approx R$ and sampling the first uniform
point with \billiard. We give the optimal solution as input and we
stop dropping $T$
when an error $\epsilon \leq 0.05$ is achieved.
Even in the case when the walk length is set equal to one, \hmcr still converges 
to the optimal solution. To the best of our knowledge, this
is the first time that a randomized algorithm, which is based on random
walks, is functional even when the walk length is set to one.
On the other hand, we set the walk length of \hitandrun\ $\OO(\sqrt{n})$ or $\OO(n)$ in our experiments. Notice that for the smaller walk length, its runtime decreases, but the method becomes unstable, as it sometimes fails to converge. For both cases its runtime is worse than that of \hmcr.

\section*{Acknowledgements}
ET is partially supported by ANR JCJC
GALOP (ANR-17-CE40-0009), the PGMO grant ALMA, and the PHC GRAPE.

\appendix
\section{Additional proofs}

To prove lemma \ref{lemma:gradient} we will need the following lemmas.

\begin{lemma}[Partial Derivative of Determinant]\label{lem:partialDerivativeDeterminant}		 
	\noindent Let $\bm{A}$ be a symmetric $m \times m$ matrix. Then
	
	\begin{displaymath}
	\frac{\partial \det\bm{A}}{\partial A_{ij}} = c_{ij}
	\end{displaymath}
	
	where $c_{ij}$ the cofactor of $A_{ij}$.
\end{lemma}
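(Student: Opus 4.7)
The plan is to apply the Laplace (cofactor) expansion of $\det \bm{A}$ along the $i$-th row. Writing
\[
\det \bm{A} \;=\; \sum_{k=1}^{m} A_{ik}\, c_{ik},
\]
where $c_{ik} = (-1)^{i+k}\, \det \bm{M}_{ik}$ and $\bm{M}_{ik}$ is the $(m-1)\times(m-1)$ minor obtained by deleting row $i$ and column $k$ of $\bm{A}$, I would then invoke the classical observation that every cofactor $c_{ik}$ depends only on the entries of $\bm{A}$ outside row $i$. In particular, $c_{ik}$ is independent of $A_{ij}$ for every $j$, so viewing $\det \bm{A}$ as a polynomial in the $m^2$ matrix entries treated as independent variables, term-by-term differentiation of the expansion picks out exactly the summand with $k = j$, giving $\partial \det \bm{A}/\partial A_{ij} = c_{ij}$.

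The only subtle point worth flagging is the role of the symmetry hypothesis. The identity as stated treats the $m^2$ entries as independent formal variables, so symmetry is not actually used in the derivation; the assumption is presumably included because the lemma is invoked in a setting (Lemma~\ref{lemma:gradient}, whose adjugate computation relies on $\mathrm{adj}(\bm{A})_{ji} = c_{ij}$ together with symmetry of $\bm{A}$) where $\bm{A} = \bm{F}(\bm{x})$ is symmetric. Had one instead identified $A_{ij}$ and $A_{ji}$ as a single variable, an extra factor of two would appear for off-diagonal entries, so this convention should be made explicit in the write-up.

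No substantive obstacle is expected: the entire argument reduces to the standard Laplace expansion combined with the fact that a cofactor is computed from a minor that excludes its own row. The only step that requires mild care is the independence claim for the minors, which follows immediately from the definition of $\bm{M}_{ik}$.
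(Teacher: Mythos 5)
Your proof is correct and follows essentially the same route as the paper: Laplace expansion of $\det\bm{A}$ along the $i$-th row together with the observation that the cofactors in that expansion do not involve the entries of row $i$, so differentiation isolates the term $A_{ij}c_{ij}$. Your side remark about treating the $m^2$ entries as independent variables (rather than identifying $A_{ij}$ with $A_{ji}$) is a fair point the paper leaves implicit, and it is the convention consistent with the chain-rule double sum used later in Lemma~\ref{lem:composition}.
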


\begin{proof}
	From Laplace expansion:
	
	\begin{displaymath}
	\det\bm{A} = \sum\limits_{j=1}^m A_{ij}c_{ij}
	\end{displaymath}
	
	Notice that $c_{1j}, \cdots, c_{mj}$ are independent of $A_{ij}$, so we have
	
	\begin{displaymath}
	\frac{\partial \det\bm{A}}{\partial A_{ij}} = c_{ij}
	\end{displaymath}

\end{proof}

\begin{lemma}\label{lem:composition}
	Let $\lmi{x}=\bm{A}_0 + x_1 \bm{A}_1 + \cdots + x_n \bm{A}_n$. Then
	
	\begin{displaymath}
	\frac{\partial \det \lmi{x}}{\partial x_k} = \trace{\adjoint{\lmi{x}} \bm{A}_k)}
	\end{displaymath}
	
\end{lemma}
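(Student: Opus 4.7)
The plan is to apply the chain rule in two stages, treating $\det\lmi{x}$ first as a function of the entries of the matrix $\lmi{x}$, and then observing that each entry depends linearly on $x_k$.

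First, I would write
\[
\frac{\partial \det\lmi{x}}{\partial x_k}
\;=\; \sum_{i=1}^{m}\sum_{j=1}^{m}
      \frac{\partial \det\lmi{x}}{\partial F_{ij}} \cdot
      \frac{\partial F_{ij}}{\partial x_k},
\]
which is just the multivariate chain rule applied to $\det\circ\bm{F}$. By Lemma~\ref{lem:partialDerivativeDeterminant}, the first factor equals the cofactor $c_{ij}$ of $F_{ij}$ in $\lmi{x}$. From the definition $\lmi{x} = \bm{A}_0 + x_1\bm{A}_1 + \cdots + x_n\bm{A}_n$ in (\ref{eq:lmi}), the second factor equals $(\bm{A}_k)_{ij}$, since only the term $x_k\bm{A}_k$ contributes when we differentiate with respect to $x_k$.

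Substituting, the partial derivative becomes
\[
\frac{\partial \det\lmi{x}}{\partial x_k}
\;=\; \sum_{i,j} c_{ij}\,(\bm{A}_k)_{ij}.
\]
Now I would recognize this sum as a trace. Recall that the classical adjugate satisfies $(\adjoint{\lmi{x}})_{ji} = c_{ij}$, i.e., $\adjoint{\lmi{x}}$ is the transpose of the cofactor matrix. Therefore
\[
\sum_{i,j} c_{ij}\,(\bm{A}_k)_{ij}
\;=\; \sum_{j}\sum_{i} (\adjoint{\lmi{x}})_{ji}(\bm{A}_k)_{ij}
\;=\; \sum_{j} (\adjoint{\lmi{x}}\,\bm{A}_k)_{jj}
\;=\; \trace{\adjoint{\lmi{x}}\,\bm{A}_k},
\]
which is exactly the claimed identity.

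No real obstacle is expected; the only point that needs care is the convention for $\adjoint{\cdot}$, which in this paper denotes the classical adjugate (as used in Lemma~\ref{lemma:gradient}) rather than a Hermitian transpose. Once this is fixed, the identity is a straightforward chain rule computation combined with the cofactor-form Jacobi formula supplied by Lemma~\ref{lem:partialDerivativeDeterminant}. Symmetry of $\lmi{x}$ is not actually needed here, although it is harmless because the formula treats the matrix entries of $\bm{F}$ as independent coordinates on $\R^{m\times m}$.
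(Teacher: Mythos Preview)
Your proposal is correct and follows essentially the same route as the paper: both apply the chain rule to $\det\circ\bm{F}$, invoke Lemma~\ref{lem:partialDerivativeDeterminant} to replace $\partial\det\bm{F}/\partial F_{ij}$ by the cofactor $c_{ij}$, compute $\partial F_{ij}/\partial x_k = (\bm{A}_k)_{ij}$, and then recognize the resulting double sum as $\trace{\adjoint{\lmi{x}}\bm{A}_k}$. Your version is slightly more explicit in justifying the last identification via $(\adjoint{\lmi{x}})_{ji}=c_{ij}$, which the paper leaves implicit.
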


\begin{proof}	
	The function $\det\lmi{x}$ is the composition of $\det \bm{A}$ and $\bm{A} = \lmi{x}$, so from Lemma \ref{lem:partialDerivativeDeterminant} and the chain rule:
	
	\begin{displaymath}
	\frac{\partial \det\lmi{x}}{\partial x_k} = \sum\limits_{i=1}^m  \sum\limits_{j=1}^m \frac{\partial \det \bm{F}}{\partial \bm{F}_{ij}}\cdot 
	\frac{\partial \bm{F}_{ij}}{\partial x_k} = \sum\limits_{i=1}^m  \sum\limits_{j=1}^m c_{ij} A^k_{ij} = \trace{\adjoint{\lmi{x}} \bm{A}_k}
	\end{displaymath}

	where $ A^k_{ij}$ the $ij$-th element of matrix $\bm{A}_k$
\end{proof}

\begin{lemma}[Adjoint Matrix of $\bm{A}$]\label{adjoint}
	Let $\bm{A}$ be a $m \times m$ matrix of rank $r(\bm{A}) = m - 1$. Then
	
	\begin{displaymath}
	\adjoint{A} = \mu(\bm{A}) \frac{\bm{v}\bm{u}^\top}{\bm{u}^\top \bm{v}}
	\end{displaymath}
	
	where $\mu(\bm{A})$ is the product of the $m-1$ non-zero eigenvalues of $\bm{A}$, and $\bm{x}$ and $\bm{y}$ satisfy $\bm{A}\bm{v} =  \bm{A}^\top\bm{u} = \bm{0}$ (see chapter 3.2 in \cite{20c71247041945789c884f7b64057a9e}).
	
\end{lemma}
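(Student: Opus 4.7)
The plan is to first exploit the classical identity $\bm{A}\adjoint{A} = \adjoint{A}\bm{A} = \det(\bm{A})\bm{I}_m$, which is the defining property of the adjugate. Since $\rank{\bm{A}}=m-1$ gives $\det(\bm{A})=0$, both products vanish. The relation $\bm{A}\adjoint{A}=0$ forces every column of $\adjoint{A}$ to lie in $\ker(\bm{A})$, which is one-dimensional and spanned by $\bm{v}$; dually, $\adjoint{A}\bm{A}=0$ forces every row to lie in $\ker(\bm{A}^\top)=\mathrm{span}(\bm{u})$. It follows that $\adjoint{A}$ has rank at most $1$ and is of the form $\adjoint{A}=c\,\bm{v}\bm{u}^\top$ for some scalar $c$ still to be identified.

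The next step is to pin down $c$ by computing $\trace{\adjoint{A}}$ in two different ways. On one hand, using the rank-one form just derived, $\trace{c\,\bm{v}\bm{u}^\top}=c\,\bm{u}^\top\bm{v}$. On the other hand, there is a standard identity: writing the characteristic polynomial as
\[
\det(t\bm{I}_m-\bm{A}) \;=\; \sum_{k=0}^{m}(-1)^{m-k}e_{m-k}(\bm{A})\,t^{k},
\]
the coefficient $e_{m-1}(\bm{A})$ equals both the sum of all principal $(m{-}1)\times(m{-}1)$ minors of $\bm{A}$ (which is exactly $\trace{\adjoint{A}}$) and the elementary symmetric polynomial of degree $m-1$ in the eigenvalues of $\bm{A}$. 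Because $\rank{\bm{A}}=m-1$, the kernel has dimension one, so $0$ is an eigenvalue; under the assumption (implicit in the definition of $\mu(\bm{A})$) that $0$ has algebraic multiplicity one, every term of $e_{m-1}$ that omits a nonzero eigenvalue vanishes, leaving
\[
\trace{\adjoint{A}} \;=\; e_{m-1}(\bm{A}) \;=\; \prod_{\lambda\neq 0}\lambda \;=\; \mu(\bm{A}).
\]

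Equating the two expressions for the trace gives $c\,\bm{u}^\top\bm{v}=\mu(\bm{A})$, whence $c=\mu(\bm{A})/(\bm{u}^\top\bm{v})$ and the claimed formula follows. The only point requiring care is the non-degeneracy condition $\bm{u}^\top\bm{v}\neq 0$, which is needed for the expression on the right-hand side to be well defined; this is the main subtlety. In the usage relevant to the paper, namely Lemma~\ref{lemma:gradient} with $\bm{F}(\bm{x})$ symmetric, one can take $\bm{u}=\bm{v}$ so that $\bm{u}^\top\bm{v}=\|\bm{v}\|^{2}>0$, and the argument goes through. In the general asymmetric case, $\bm{u}^\top\bm{v}\neq 0$ is precisely the condition that the algebraic multiplicity of the eigenvalue $0$ equals its geometric multiplicity (equal to $1$), matching the hypothesis under which $\mu(\bm{A})$ is well defined.
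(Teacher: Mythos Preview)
The paper does not supply its own proof of this lemma: the statement in the appendix is given only with a reference to chapter~3.2 of~\cite{20c71247041945789c884f7b64057a9e}, and no argument appears in the text. Consequently there is nothing to compare your proposal against in terms of the authors' own reasoning.

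That said, your argument is correct and is the standard route to this identity. Using $\bm{A}\adjoint{A}=\adjoint{A}\bm{A}=\det(\bm{A})\bm{I}_m=0$ to force $\adjoint{A}=c\,\bm{v}\bm{u}^\top$, and then identifying $c$ via $\trace{\adjoint{A}}=e_{m-1}(\bm{A})=\mu(\bm{A})$, is exactly how this is usually proved. Your remark that the well-definedness of the right-hand side requires $\bm{u}^\top\bm{v}\neq 0$, equivalently that the eigenvalue $0$ be algebraically simple, is the right caveat; it is implicit in the paper's phrasing ``the product of the $m-1$ non-zero eigenvalues'' and is automatically satisfied in the symmetric case used in Lemma~\ref{lemma:gradient}, where one takes $\bm{u}=\bm{v}$.
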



\section{Matrices of the Example}
The spectrahedron was randomly generated as in \cite{Dabbene2010ARC}.
Due to space considerations, the entries of the matrices are rounded to
the first decimal.

\begin{equation}
\bm{A_0} = \begin{bmatrix}
16.7&3.7&12.3&8.7&5.1&10.4\\
3.7&9.4&2.3&4&-2.3&-1\\
12.3&2.3&26.8&18.7&7.1&16.7\\
8.7&4&18.7&20&3.7&12.3\\
5.1&-2.3&7.1&3.7&6.1&5.4\\
10.4&-1&16.7&12.3&5.4&18.7
\end{bmatrix}
\end{equation}

\begin{equation}
\bm{A_1}=\begin{bmatrix}
0.5&-0.4&2.7&0&0&\\
-0.4&1.4&-0.2&0&0&0\\
2.7&-0.2&1.7&0&0&0\\
0&0&0&0.5&-0.4&2.7\\
0&0&0&-0.4&1.4&-0.2  \\
0&0&0&2.7&-0.2&1.7
\end{bmatrix} 
\end{equation}

\begin{equation}
\bm{A_2} =\begin{bmatrix}
2.6&-0.1&3&0&0&0\\
-0.1&1&-0.1&0&0&0\\
3&-0.1&-1&0&0&0\\
0&0&0&2.6&-0.1&3&  \\
0&0&0& -0.1&1&-0.1  \\
0&0&0&3&-0.1&-1
\end{bmatrix}
\end{equation}



\pagebreak
\bibliographystyle{abbrv}
\bibliography{sampling}  

\end{document}